\documentclass[journal]{IEEEtran}
\usepackage{mathrsfs}
\usepackage{amsmath}
\usepackage{amssymb}
\usepackage{changes}
\usepackage{mathtools} 
\usepackage{amsthm}
\usepackage{verbatim}
\usepackage{amsfonts}
\usepackage{cite}
\usepackage{algorithm}
\usepackage{multirow}
\usepackage{subfigure}
\usepackage{algpseudocode}
\usepackage{booktabs}
\usepackage{amstext}
\usepackage{bm}
\usepackage{threeparttable}
\usepackage{nomencl}

\usepackage{url}

\allowdisplaybreaks[4]
\setlength{\abovecaptionskip}{2pt}
\setlength{\belowcaptionskip}{2pt}
\setlength{\intextsep}{2pt}
\setlength{\textfloatsep}{2pt}
\setlength{\floatsep}{2pt}

\setlength{\abovedisplayskip}{2pt}
\setlength{\belowdisplayskip}{2pt}


\usepackage{float}
\usepackage{graphicx}

\DeclareGraphicsExtensions{.pdf,.png,.jpg,.eps,.ps}

\hyphenation{op-tical net-works semi-conduc-tor}
\newcommand{\RNum}[1]{\uppercase\expandafter{\romannumeral #1\relax}}

\def\baa{\begin{align}}
\def\eaa{\end{align}}

\newcommand{\bsq}{\begin{subequations}}
	\newcommand{\esq}{\end{subequations}}

\newcommand{\beq}{\begin{equation}}
\newcommand{\eeq}{\end{equation}}
\newcommand{\bq}{\begin{eqnarray}}
\newcommand{\eq}{\end{eqnarray}}
\newcommand{\bqn}{\begin{eqnarray*}}
	\newcommand{\eqn}{\end{eqnarray*}}
\newcommand{\bee}{\begin{enumerate}}
	\newcommand{\eee}{\end{enumerate}}
\newcommand{\bi}{\begin{itemize}}
	\newcommand{\ei}{\end{itemize}}

\usepackage{comment}
\usepackage{color}
\newboolean{showcomments}
\setboolean{showcomments}{true}
\newcommand{\wang}[1]{\ifthenelse{\boolean{showcomments}}
	{ \textcolor[rgb]{1,0,1}{(ZW:  #1)}}{}}
\newcommand{\fliu}[1]{\ifthenelse{\boolean{showcomments}}
	{ \textcolor{red}{(FL:  #1)}}{}}
\newcommand{\zhang}[1]{\ifthenelse{\boolean{showcomments}}
	{ \textcolor{blue}{(YFZ:  #1)}}{}}

\theoremstyle{definition}
\newtheorem{theorem}{Theorem}
\newtheorem{lemma}[theorem]{Lemma}

\newtheorem{condition}{Condition}

\theoremstyle{definition}
\newtheorem{definition}{Definition}
\newtheorem{remark}{Remark}

\newtheorem{assumption}{\textit{Assumption}}

\begin{document}

\title {Asynchrony-Resilient and Privacy-Preserving Charging Protocol for Plug-in Electric Vehicles}

\author{Yunfan ~Zhang,
        ~Feng ~Liu,~\IEEEmembership{Senior Member,~IEEE,}
        ~Zhaojian ~Wang,        
        ~Jianhui ~Wang, ~\IEEEmembership{Senior Member,~IEEE,}\\
        ~Yifan ~Su,
        ~Yue ~Chen,
        ~Cheng ~Wang,
        ~and~Qiuwei~Wu
}
\markboth{Journal of \LaTeX\ Class Files,~Vol.~xx, No.~xx, August~xxxx}%
{Shell \MakeLowercase{\textit{et al.}}: Bare Demo of IEEEtran.cls for IEEE Journals}

\maketitle

\begin{abstract}
	
The proliferation of plug-in electric vehicles (PEVs) advocates a distributed paradigm for the coordination of PEV charging.  Distinct from existing primal-dual decomposition or consensus methods, this paper proposes a cutting-plane based distributed algorithm, which enables an asynchronous coordination while well preserving individual's private information. To this end, an equivalent surrogate model  is first constructed  by exploiting the duality of the original optimization problem, which masks the private information of individual users by a transformation. Then, a cutting-plane  based algorithm is derived to solve the surrogate problem in a  distributed manner with intrinsic superiority to cope with various asynchrony. Critical implementation issues, such as the distributed initialization, cutting-plane generation and localized stopping criteria,  are discussed in detail. Numerical tests on IEEE 37- and 123-node feeders with real data show that the proposed method is resilient to a variety of asynchrony and admits the plug-and-play operation mode. It is expected the proposed methodology provides an alternative path toward a more practical protocol for PEV charging.  
\end{abstract}

\begin{IEEEkeywords}
Plug-in electrical vehicles (PEV), charging protocol, distributed optimization, asynchronous privacy preserving.
\end{IEEEkeywords}

\IEEEpeerreviewmaketitle

\section{Introduction}
\subsection{Background and Motivation}
The past years  witnessed the proliferation of plug-in electric vehicles (PEVs). However, their rapid growth inevitably creates new challenges to power system operation. Particularly, as traditional distribution systems were not  designed to support simultaneous charging of many PEVs\cite{Gan2013Optimal},  transformer capacity expansion or even reconstruction of the distribution system are needed to meet the growing demand for PEV charging \cite{ref3}. However, such costly countermeasures could be alleviated or even avoided if  the charging behavior of PEVs are well managed\cite{Gan2013Optimal},  in either a centralized or distributed manner.   

Traditional centralized management of PEV charging needs to collect all PEVs' information, such as positions, available charging time and state of charge (SOC), etc. Hence it may raise severe privacy concerns in individual PEV owners \cite{Engel2013Privacy}. Moreover, the  decision center may suffer from a heavy communication burden and high computational complexity. In this regard, distributed management was developed, where the charging patterns of PEVs are decided locally following a certain coordination scheme. It is expected to better protect the privacy of PEV owners and enable a faster response to environmental changes\cite{Omran2017A}, which is crucially important when numerous  PEVs  disperse across the distribution network. However, there always exist various kinds of asynchrony in practice due to non-ideal communication such as time delay and packet drop. These regards motivate us to address the  protocol of PEV distributed charging in this paper, considering asynchrony resilience and privacy-preserving.    

\subsection{Related Works}
Generally, prior works on distributed PEV charging management can be cast into two branches: non-cooperative strategies and cooperative ones
, which are briefly reviewed as follows.

\subsubsection{ Non-cooperative Strategies}
Non-cooperative charging strategies oftentimes are  partially distributed, where a coordinator is needed to broadcast coordination signals (usually electricity price) and then each PEV reacts to its received signals. A typical distributed non-cooperative strategy relies on  a one-way communication broadcast\cite{Turitsyn2010Robust}, but this open-loop approach appears to be less effective due to the absence of feedback adjustments\cite{Lopes2009Identifying, Vaya2012Centralized}. To address this problem, iterative strategies emerge to search for the optimal or quasi-optimal charging profiles, where  bi-directional communication between the coordinator and  individual PEVs is required and each PEV solves a restricted sub-problem in every round. Under this framework, several distributed charging algorithms are derived, based on Lagrangian dual decomposition\cite{Jiang2013Decentralized}, projected gradient\cite{Liu2017Decentralized,Ardakanian2013Distributed}, Alternating Direction Method of Multipliers (ADMM)\cite{Liang2016Scalable, Vaya2015Decentralized, Kraning2014Dynamic, Rivera2013Alternating} and theory of non-cooperative games\cite{Ma2012Decentralized,Parise2014Mean}, to name a few.

\subsubsection{Cooperative Strategies}
Cooperative strategies are usually investigated under a fully distributed framework, where PEVs collaborate (usually with their immediate neighbors) to achieve a certain optimal target\cite{Logenthiran2012Multi}. Such peer-to-peer (P2P) based schemes\cite{MohammadiA} serve as more flexible, scalable and robust alternatives since individual PEVs can autonomously achieve coordination with the absence of a coordinator. In this regard, distributed solution algorithms based on Karush-Kuhn-Tucker (KKT) conditions and consensus techniques are designed, see, for example, \cite{Rahbari2014Cooperative,Xu2015Optimal,MohammadiA}. 

The potential of P2P based cooperative schemes, however, has not been well addressed yet when it comes to the following two critical implementation issues: \romannumeral1) User-state-information (USI) privacy. PEV users are reluctant to disclose their USI (such as SOC, positions, and demand profiles, etc.) neither to a center nor to other users. 
In regard to the information exchange which is necessary for coordination in a P2P network, privacy issue also remains as a concern. \romannumeral1) Imperfect communication. Considering time delays, packet drops, topology changes and non-identical computation capabilities of individual users, the participants have to wait for the slowest one to finish before executing their local actions in the next iteration. Though prior works  \cite{Low1999,Wang2018D,HaleAsynchronous} have achieved many successes on distributed algorithms  under asynchronous communication, their convergence results and solution quality rely on restrictive assumptions. Moreover, it is not trivial  to select appropriate parameters and adjust the step size in the updating process. 

\subsubsection{Distributed Decision-making based on Cutting Planes}
The cutting-plane theory \cite{Eaves1971G}  has recently drawn increasing attention in the community of distributed decision-making. However, there are few works considering its application to  power systems, except \cite{Liu2016Fully} and \cite{ref17} that discuss dynamic economic dispatch and microgrid control, respectively. Different from traditional consensus algorithms, a cutting-plane based consensus algorithm can achieve an agreement on a common query point  through  iterative constraint exchanges. This salient feature can better decompose the computation  into individual agents with the minimal requirements of information synchrony, which inspires a suitable framework for the distributed PEV charging coordination against   asynchrony. 

\subsection{Contribution}
In this paper, a novel distributed optimal scheduling method as well as its implementation are proposed for individual PEV charging coordination under both local and global constraints, considering communication asynchrony and privacy preserving. The main contributions of this paper are threefold.

1) \emph{Asynchrony-Resilience.} Distinct from the celebrated  primal-dual decomposition methods \cite{Low1999,Wang2018D,HaleAsynchronous} and ADMM  algorithms \cite{Liang2016Scalable, Vaya2015Decentralized, Kraning2014Dynamic, Rivera2013Alternating, distributedADMM2} where   primal/dual variables are exchanged, the proposed  method turns to  exchange cutting planes among neighboring individuals. The resulting algorithm intrinsically   admits  asynchronous implementations, favoring a strong resilience to various asynchrony in practice such as time delays, packet drops and communication  topology changes. 
 
2) \emph{Privacy-Preserving.} Regarding privacy concerns, this paper  uses a surrogate model to mask the private information of
individual users during  the iterative coordination process. 
Different from existing works where the charging profiles or multipliers of the corresponding optimization problem are collected/exchanged  \cite{Jiang2013Decentralized,Liu2017Decentralized,Ardakanian2013Distributed,Liang2016Scalable, Vaya2015Decentralized, Kraning2014Dynamic, Rivera2013Alternating}, the proposed method only delivers aggregated information of the surrogate model, which can well protect the user-state  privacy of individual PEV owners.

3) \emph{Convergence-guarantee.}
 The existing cutting-plane based works heuristically take  any locally converged consensus result  as the optimal solution \cite{Liu2016Fully,ref17}, which lacks of a theoretic convergence guarantee. 
 This paper first unfolds the solution quality of the algorithm and derives  a completely localized stopping criteria. It is  proved that the solution series obtained by the proposed method must converge to the global optimum. Moreover,  the localized stopping criteria can achieve an arbitrarily small error.


\subsection{Organization}
The rest of the paper is organized as follows. The problem description with necessary preliminaries and notations is stated in Section II and Section III derives an equivalent surrogate model. Section IV presents the distributed solution algorithm based on cutting-plane consensus. Case studies are introduced in Section V. Finally, Section VI concludes the paper.

\section{Notation and Problem Formulation }
\subsection{Preliminaries and Notations}

In this paper, $\mathbb{R}^n$ ($\mathbb{R}^n_{+}$) depicts the $n$-dimensional (non-negative) Euclidean space. Use $ \mathbb{Z}^+ $ to denote the set of positive integers. For a column vector $z\in \mathbb{R}^n$ (matrix $A\in \mathbb{R}^{m\times n}$), $z^{\mathsf{T}}$($A^{\mathsf{T}}$) denotes its transpose. $\Vert\cdot\Vert$ denotes the Euclidean norm. Given a collection of $y_i$ for $i$ in a certain set $\mathcal{N}$, define $\text{col}(y_{j}):=(y_1, y_2, \cdots, y_n)^\mathsf{T}$ and denote its vector form by ${y}:=\text{col}(y_i)$. We use $\bm{1}$ (resp. $\bm{0}$) to denote vector of ones (resp. zeros). Notations for cutting-plane and cutting-plane set are given in Definition \ref{def1} and Definition \ref{def2} respectively.
\begin{definition}[Cutting-Plane]\label{def1}
	Given a convex set $\mathcal{S}\subset\mathbb{R}^n$ and a query point $z_q\notin\mathcal{S}$, a half-space
	\begin{align} 
	h_{z_q}:=\left\{z|a^{\mathsf{T}}_{z_q}z\le b_{z_q}\right\},a_{z_q}\in\mathbb{R}^n,b_{z_q}\in\mathbb{R}^1
	\end{align}
	is referred to as the cutting-plane of $\mathcal{S}$ and $z_q$ if it satisfies the following properties: (i) $a_{z_q}\neq \bm{0}$, (ii) $a^{\mathsf{T}}_{z_q}z_q> b_{z_q}$ and (iii) $a^{\mathsf{T}}_{z_q}z\le b_{z_q},\forall z\in\mathcal{S}$.
\end{definition}
\begin{definition}[Cutting-Plane Set]\label{def2}
	Cutting-plane set $H$ is the collection of $m$ single cutting-planes. Specifically,
	\begin{align}
	H:=\cup_{k=1}^m h_k
	\end{align}
	where $h_k:=\left\{z|a_k^{\mathsf{T}}z\le b_k \right\}$. The induced polyhedron of $H$ is denoted by $\mathcal{H}:=\left\{z|A^{\mathsf{T}}_{H}z\le b_{H}\right\}$, with $A_{H}:=[a_1,\ldots,a_m]$ and $b_H:=[b_1,\ldots,b_m]^{\mathsf{T}}$. 
\end{definition}
Note that in this paper the union symbol (same for intersection symbol) plays an opposite role on cutting-plane set and the induced polyhedron. For example, given two cutting-plane sets $H_1$ and $H_2$, we have $H=H_1\cup H_2\Rightarrow\mathcal{H}=\mathcal{H}_1\cap\mathcal{H}_2$, where $\mathcal{H}_1$, $\mathcal{H}_2$ and $\mathcal{H}$ are the polyhedrons directly induced by $ H_1 $,$ H_2 $ and $H$, respectively. For convenience of phrasing, let $h^{\emptyset}:=\left\{z|\bm{0}^{\mathsf{T}}z\le 0\right\}$ denote an empty cutting-plane.

\subsection{Charging of PEVs}

\begin{figure}[b]
	\centering
	\includegraphics[scale=0.3]{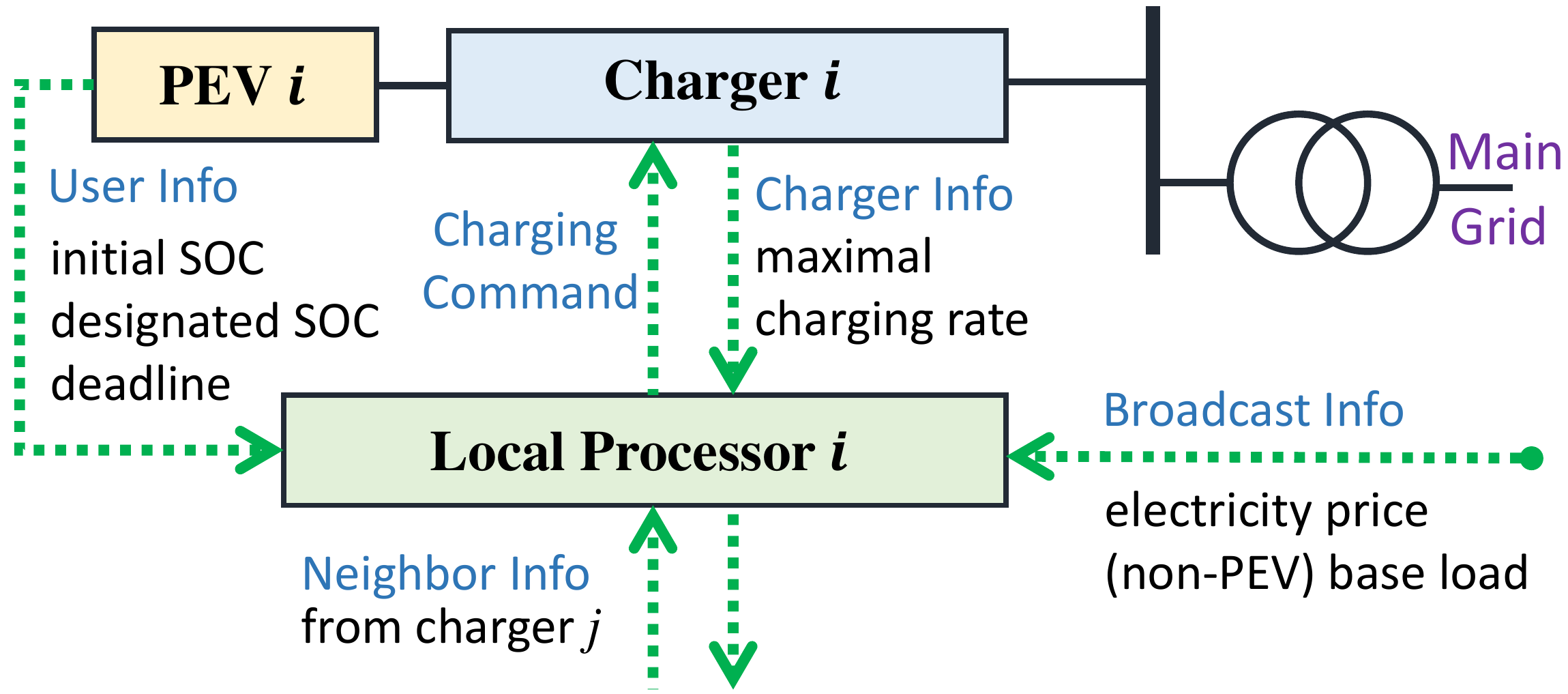}
	\caption{The schematic of the proposed local processor.}
	\label{facilities}
\end{figure}

To coordinate PEV charging in a fully distributed manner, we adopt the configuration shown in Fig.\ref{facilities}. Each PEV charger is equipped with a local processor with a certain capability of communication and computation. The processor collects data from the user (such as the designated SOC and charging deadline) and data from the PEV (such as the initial SOC and capacity of batteries). It also receives some broadcast information such as the electricity price. The local processor enables a bi-directional information exchange with its neighbors. Once the optimal charging strategy is derived, the control command is generated and sent to the PEV charger. 

\subsection{Communication network }
We consider a generalized asynchronous communication protocol presented in \cite{Wang2018D}, where each processor has its own concept of time defined by a \emph{local clock} $k_i\in\mathbb{Z}^{+}$. $k_i$ triggers when processor $i$ awakes, independently of other processors, to conduct local computations and update information to its neighbors. When $i$ is idle, it listens for messages from neighbors and stores them to its receiving cache. Then let $k\in\mathbb{Z}^{+}$ denote a virtual \emph{global clock} that does not exist in reality and is used only for analysis. The relationship between local clocks and the global clock is depicted in fig. \ref{clock}. 
\begin{figure}[!ht]
	\centering
	\includegraphics[scale=0.35]{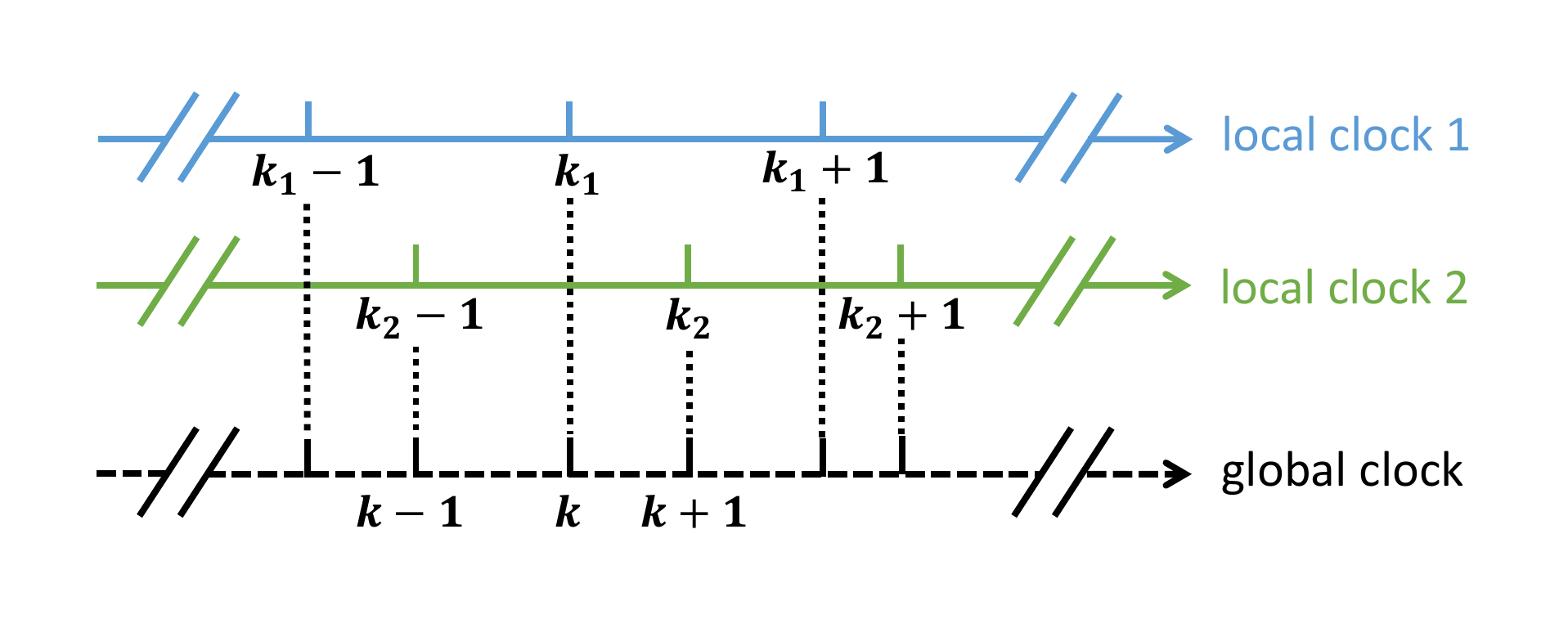}
	\caption{An instance of two local clocks and the virtual global clock. 
	}
	\label{clock}
\end{figure}

To consider the change of communication topology, we model the communicating network at the virtual global clock $k$ as a $k-$dependent directed graph $\mathcal{G}_k=(\mathcal{N},\mathcal{E}_k)$, where $\mathcal{N}:=\left\{1,2,\ldots,n\right\}$ is the set of local charger processors and $\mathcal{E}_k\subset \mathcal{N}\times \mathcal{N}$ is the edge set. If processor
$i$ transmits messages to processor $j$ at $k$, there is an edge from node $i$ to $j$ at $k$, denoted by $(i,j)\in \mathcal{E}_k$. For an edge set $\mathcal{E}_k$, we denote by $\mathcal{N}^{-}_{i,\mathcal{E}_k}:=\{j|(i,j)\in \mathcal{E}_k\}$ the out-neighbors of processor $i$ and by $\mathcal{N}^{+}_{i,\mathcal{E}_k}:=\{j|(j,i)\in \mathcal{E}_k\}$ the in-neighbors.
Also, for the $k-$dependent graph $\mathcal{G}_k$, we denote by $d_{\mathcal{G}_k}$ its diameter and $d_{\mathcal{G}_k}$ equals to the diameter of static graph $(\mathcal{N},\mathcal{E}_{\infty})$ where $\mathcal{E}_{\infty}:=\{(i,j)|(i,j)\in\mathcal{E}_k\text{ for infinitely many indices }k\}$.

Throughout this paper, we assume a ``intermittent''  connectivity condition of $\mathcal{G}_k$ as in Assumption \ref{assumption2}. It is a typical setting in asynchronous multi-agent optimization and control with a changing topology: the information propagation from one part to another is ensured during  a certain number of time slots, i.e., every $\overline{T}$ consecutive global time clocks. I will be used only for localized stopping criteria design later on.

\begin{assumption}[$\overline{T}$-strongly connected\cite{Nedic2014Distributed}]
	\label{assumption2}
	There exits a  $\overline{T}\in\mathbb{Z}^+$ such that the graph with edge set $\bigcup_{\tau=s\overline{T}}^{(s+1)\overline{T}-1}\mathcal{E}_{\tau}$ is strongly connected for every $s\in\mathbb{Z}^{+}$.
\end{assumption}

\subsection{Battery Model of PEVs}
\label{battery_model}
Consider there are $n$ PEVs to be charged over $T$ discrete time slots. Denote by $\mathcal{N}:=\left\{1,2,\cdots, n\right\}$ the set of PEVs, and $\mathcal T:=\{1,2,\cdots,T\}$ the set of time slots. For PEV processor $i$, let $p_{i}^{t}\in\mathbb{R}$ denote the charging power of PEV $i$ at time $t$ and we use $p_i$ to denote the column vector of $p_i^t$ over the entire time horizon of $\mathcal{T}$ for simplicity, i.e.,  $p_i:=(p_i^1, p_i^2, \ldots, p_i^T)^{\mathsf{T}}$.

Each PEV is available for load dispatch once it is plugged in and before the charging is completed. For arbitrary $i\in\mathcal{N}$, limits on the total charging amount should be satisfied according to its SOC, leading to constraints \eqref{c_limit}.
\begin{subequations}
	\label{c_limit}
	\begin{align}
	\label{cons1}
	&({ SOC}^{{ d}}_{i}-{ SOC}^{{ s}}_{i})\cdot { Cap}_i \leq \eta_{i} \cdot \sum\nolimits_{t\in\mathcal{T}} p_{i}^{t}  \\
	\label{cons2}
	&(\overline{{ SOC}_{i}}-{ SOC}^{{ s}}_{i})\cdot { Cap}_i \ge \eta_{i} \cdot \sum\nolimits_{t\in\mathcal{T}} p_{i}^{t}
	\end{align}
\end{subequations}
${ SOC}^{{ s}}_{i}$ and ${ SOC}^{{ d}}_{i}$ stand for the initial and final state of charge of PEV $i$, respectively. ${ Cap}_i$ is its battery capacity and $\eta_i$ scales the charging efficiency. Constraint (\ref{cons2}) implies that the charging process stops once the battery reaches its maximum state of charge (denoted by $\overline{{ SOC}}_i$). 

Each PEV can charge only after it plugs in at a certain time slot ${ T}^{ s} \in \mathcal{T}$ and before it leaves at  ${ T}^{ d} \in \mathcal{T}$, where ${ T}^{ s}<{ T}^{ d}$. Hence we have constraints (\ref{c_limit2}).
\begin{subequations}
	\label{c_limit2}
	\begin{align}
	\label{cons3}
	p_{i}^{t} \cdot (t-{ T}^{ s}_{i}) \ge 0,\quad t\in\mathcal{T}, \ i\in \mathcal{N}\\
	\label{cons4}
	p_{i}^{t} \cdot (t-{ T}^{ d}_{i}) \leq 0,\quad t\in\mathcal{T}, \ i\in \mathcal{N}
	\end{align}
\end{subequations}

At each time slot, the charging power of a PEV is assumed constant  but can vary from $ 0 $ to its maximum charging power  $\overline{{ p}}_i$ at different time slots. Then, we have the constraint (\ref{cons5}).
\begin{equation}
\label{cons5}
0 \leq p_{i}^{t} \leq \overline{{ p}}_{i}, \quad t\in \mathcal
T,\ i\in \mathcal{N}
\end{equation}

Combining (\ref{c_limit}) to (\ref{cons5}), we have individual feasible charging region over the time horizon of $\mathcal{T}$ denoted by $\mathcal{P}_i$, i.e.,
\begin{align}\label{feasible_region}
\mathcal{P}_{i}:=\{\ p_i\in\mathbb{R}^{T}\ |\ p_i\ \text{satisfies}\ \eqref{c_limit} - \eqref{cons5}\ \}.
\end{align}
Note that $\mathcal{P}_{i}$ should be kept to PEV processor $i$ itself, as the involved arrival time, departure time and battery state are private information.

\subsection{Coordination of  PEV Charging}
The charging coordination aims to minimize the total cost while satisfying  system operation constraints and individual charging demands. The problem can be formulated as follows. 
\begin{subequations}\label{COC}
	\begin{align}
	\label{con6a}
	{\rm CoC}:\quad &\min\nolimits_{}\ \sum\nolimits_{i\in\mathcal{N}} f_i(p_i)\\
	\label{con6b}
	{\rm s.t.}\quad&\sum\nolimits_{i\in\mathcal{N}} p_{i}\leq L\quad(\text{dual multiplier:}\ \pi)\\
	\label{con6c}
	\quad&p_i\in\mathcal{P}_{i},\forall i\in\mathcal{N}.
	\end{align}
\end{subequations}
where, $p_i$ is a $T$-dimensional decision vector representing the charging power of PEV $i$ in the $T$ time slots. The charging cost of PEV $i$, denoted by $f_i(p_i):\mathbb{R}^{T}\rightarrow\mathbb{R}$, is a convex function with respect to the charging power $p_i$.

Congestion due to feeder head capacity limit is considered in \eqref{con6b}. The right-hand-side parameter $L$ is a $T-$dimension vectorthat stands for the maximum available total charging power to avoid overload on feeder head at each time slot. $L$ is determined by the distribution system operator (DSO), and delivered to at least one processor. Note that $L$ can also be properly designed to achieve a valley-filling purpose or other demand response aims. Besides the coupling constraints, individual charging demand is captured by (\ref{con6c}) where $\mathcal{P}_i$ is processor $i$'s feasible region over the time horizon of $\mathcal{T}$ and its specific form is given by \eqref{feasible_region}. In practice, usually	$f_i(\cdot)$, $\mathcal{P}_i$ and $p_i$ are private information of user $i$, which can only be accessed by processor $i$ and should not be disclosed to others. 

For the convex problem CoC  \eqref{COC}, we also make the following regular assumption to guarantee the strong duality holds.

\begin{assumption}
	\label{Slater}
	The Slater's condition \cite[Chapter 5.2.3]{boyd2004convex}  holds for CoC, i.e., there exist $p_i$ in the interior of $\mathcal{P}_i$ such that (\ref{con6b}) holds.
\end{assumption}

%
%
%
%


\section{A Surrogate Model of CoC}
In this section, first we decompose \eqref{COC} by constructing an equivalent transformation, then derive a surrogate problem to mask the private information of individual PEVs during the iterative charging coordination, as we explain. 

First of all, invoking \cite[Chapter 5.1.1]{boyd2004convex}, the Lagrangian dual of (\ref{COC}) is given by
\begin{equation}
	\max_{\pi\ge0}\min_{
		p_i\in\mathcal{P}_i,\forall i
		}\left\{\pi^{\mathsf{T}}\left(-L+\sum\nolimits_{i\in\mathcal{N}}p_i\right)+\sum\nolimits_{i\in\mathcal{N}}f_i(p_i)\right\}\label{reform1}
\end{equation}
where  $\pi\in\mathbb{R}^{ T}$ is the dual variable vector corresponding to the global constraint \eqref{con6b}. (\ref{reform1}) can be further rewritten as
\begin{equation}
\label{reform2}
\begin{split}
\max_{\pi\ge0}\{\underbrace{\min_{p_{i^*}\in\mathcal{P}_{i^*}}\left\{f_{i^*}(p_{i^*})+\pi^{\mathsf{T}}(p_{i^*}-L)\right\}}_{\mathcal{U}_{i^*}(\pi)}\\
+\sum\nolimits_{i\in\mathcal{N}\backslash\left\{i^*\right\}}\underbrace{\min_{p_i\in\mathcal{P}_i}\left\{f_i(p_i)+\pi^{\mathsf{T}}p_i\right\}}_{\mathcal{U}_i(\pi)}\}.
\end{split}
\end{equation}
where $i^*$ who is informed of the value of $L$ is uniquely pre-determined. Now define local dual functions $\mathcal{U}_i(\pi):\mathbb{R}^{T}\mapsto\mathbb{R}$ as in (\ref{reform2}). Note that $-\mathcal{U}_i(\pi)$ are convex functions with respect to $\pi$\cite[Chapter 3.2.3]{boyd2004convex}. Then (\ref{reform2}) can be reformulated into (\ref{equivalent}), a convex problem, for succinctness. 
\begin{equation}
\max_{\pi\ge0}\sum\nolimits_{i\in\mathcal{N}}\mathcal{U}_i(\pi)\label{equivalent}
\end{equation}
The transformation from (\ref{COC}) to (\ref{equivalent}) is  built on the Lagrangian decomposition \cite[Chapter 4.3.1]{Bo2010Distributed}, by dualizing the coupling  constraint (\ref{con6b}) to obtain a separate structure as in (\ref{equivalent}). The equivalence is guaranteed by noting that strong duality holds under Assumption \ref{Slater}. Let $\pi^*$ denote the optimal solution of  problem (\ref{equivalent}). If $f_i(p_i)$ is strictly convex\footnote[1]{In case $f_i(p_i)$ is not strictly convex, say, it is linear in $p_i$, the difficulty in primal recovery 
 can be avoided by adding a sufficiently small quadratic item to the objective function without revising the optimal solution\cite{ref19.2}.}, the optimal solution of the primal CoC problem is uniquely determined by
\begin{equation}
\label{eq:back}
p_i^*=\arg\mathcal{U}_i(\pi^*),\forall i\in\mathcal{N}.
\end{equation}
Note that, without knowing other's $f_i(\cdot)$ and $\mathcal{P}_i$, one cannot infer other PEV's optimal charging profile, which protects the private information of individual PEVs.

Inspired by  Dantzig-Wolfe decomposition \cite{Dantzig}, we construct a further transformation on (\ref{equivalent}). New decision variables of the reformulated optimization problem, $z\in\mathbb{R}^{T+n}$, consists of two parts $\pi\in\mathbb{R}^{T}$ and $u:=col(u_i)\in\mathbb{R}^n$, where $u_i$ is introduced to replace the item $\mathcal{U}_i(\pi)$ in (\ref{equivalent}). The relationship between $\pi$ and $u_i$, which is originally described by $u_i=\mathcal{U}_i(\pi)$ as in (\ref{reform2}), is now captured by the feasible region of $z$ which is denoted by $\mathcal{Z}_i$ and identified in (\ref{define_Z}). Note that $Z_i$ is convex since $f_i$ is a convex function\cite[Chapter 3.2.3]{boyd2004convex}. Then,  problem (\ref{equivalent}) is equivalently converted into the convex problem (\ref{equivalent2}) where ${\rm e}:=(\bm{0}^{\mathsf{T}},\bm{1}^{\mathsf{T}})^{\mathsf{T}}$ is a constant  vector. 
\begin{subequations}
\label{equivalent20}	
\begin{align}
\label{equivalent2}
&\max_{z}{\rm e}^{\mathsf{T}}z,\ {\rm s.t.}\ z\in\cap_{i\in\mathcal{N}}\mathcal{Z}_i\\
\label{define_Z}
&\mathcal{Z}_i:=\left\{z\Bigg|
\begin{array}{l}
\pi\ge 0\\
u_i\le f_i(p_i)+\pi^{\mathsf{T}}p_i,\forall p_i\in\mathcal{P}_i,\text{if }i\in \mathcal{N}\backslash\left\{i^*\right\} \\
u_i\le f_i(p_i)+\pi^{\mathsf{T}}(p_i-L),\forall p_i\in\mathcal{P}_i,\text{if }i= i^*
\end{array}
\right\}
\end{align}
\end{subequations}
In this way, the original centralized optimization problem is converted into its surrogate dual model, the fully-distributed \eqref{equivalent2} with global decision variables $z$ and several isolated feasible regions of $z$. Note that  the dual function ${\rm e}^{\mathsf{T}}z$ as well as the set constraints $Z_i$ are both linear in $z$, providing great benefits for computation efficiency and enabling the cutting-plane exchange. In addition, as we will show in Section IV, this transformation enables processors to reach consensus with respect to the dual variable $z$ (or say $\pi$) under completely asynchronous center-free network without disclosing information about their local objective $f_i(\cdot)$, demand $\mathcal{P}_i$ and other sensitive data including $\mathcal{U}_i(\cdot)$ and $\mathcal{Z}_i$.

\section{Asynchronous Distributed Algorithm}

In this section, we develop a  cutting-plane based distributed algorithm to solve the surrogate optimization problem \eqref{equivalent20} asynchronously, and then derive an error-bounded solution under a completely asynchronous communication protocol. 

\subsection{Cutting-Plane based Asynchronous Distributed Algorithm}
Considering each PEV processor runs at its local clock $k_i$,  the asynchronous algorithm is derived as in Algorithm \ref{algorithm}. 

The basic idea of the algorithm \ref{algorithm} is as follows. A set of cutting planes are generated by each processor and is individually updated in every round of iterations. After reading the cutting-plane set from neighbors' output cache, the local processor collects all received cutting-planes and its own cutting-planes to form a polyhedron, denoted by $\mathcal{H}^{[i]}_{tmp}(k_i)$ for processor $i$ in the $k_i^{\rm th}$ round of local iteration\footnote[2]{In the rest of this paper, for a variable  $x^{[i]}_m(k)$, $i$ is processor $i$, $k$ is the $k^{\rm th}$ iteration round of processor $i$. $m$ stands for the $m^{\rm th}$ component of $x$}. The
polyhedron can be regarded as an approximation of the feasible region $\mathcal{S}:=\cap_{i\in\mathcal{N}}\mathcal{Z}_i$. Moreover, in each round of iteration, additional cutting planes are added to constantly shrink the polyhedron, leading to a more and more accurate estimation. Mathematically, this procedure is almost the same as the outer approximation method.  Since $\mathcal{H}^{[i]}(k_i)\subseteq\mathcal{H}^{[i]}(k_i-1)$ holds for each iteration,  $\mathcal{H}^{[i]}(k_i)$ will eventually  approach $\mathcal{S}$.  In this way, each local processor only needs to know its own $\mathcal{Z}_i$, and iteratively approaches the feasible region $\mathcal{S}$. 
Once the consensus on $\mathcal{S}$ is achieved, the consensus value of $z$ is obtained. Then each PEV processor can extract its optimal charging profile via (\ref{eq:back}).

For Algorithm \ref{algorithm}, we have the following useful remarks. 

\begin{algorithm}[h]
	\caption{Asynchronous PEV Charging for processor $i$}
	\footnotesize
	\label{algorithm}
	\hspace*{0.02in}{\bf Input:} 
	The local feasible region $\mathcal{P}_i$ of processor $i$\\
	\hspace*{0.02in}{\bf Iteration at} $k_i$: Suppose processor $i$'s clock ticks at $k_i$. Then it is activated to update its cutting-plane set as follows:\\
    \hspace*{0.02in}{\bf Step 1: Reading Phase}\\
    \hspace*{0.04in}
    Get cutting-plane set from its input-neighbors' output cache. Generate temporary cutting-plane set according to (\ref{alg_1}).\\
    \begin{equation}
    \label{alg_1}
    H^{[i]}_{tmp}(k_i)=\left(\cup_{j\in\mathcal{N}_I(i,k_i)}H^{[j]}(k_j)\right)\cup H^{[i]}(k_i)
    \end{equation}
	\hspace*{0.02in}{\bf Step 2: Computation Phase}\\
	\hspace*{0.04in}
	Solve a linear programming 
	\begin{align}
	z^{[i]}(k_i):=\arg\max\nolimits_{z} e^{\mathsf{T}}z-\rho{\Vert z\Vert}^2\ \text{s.t.}\ z\in\mathcal{H}^{[i]}_{tmp}(k_i)
	\end{align}
	where $\mathcal{H}^{[i]}_{tmp}(k_i)$ is the polyhedron induced by $H^{[i]}_{tmp}(k_i)$ and $\rho$ is a sufficiently small positive number. The penalty item $-\rho{\Vert z\Vert}^2$ is added to derive a unique solution which has minimal Euclidean norm. Then, shrink $H^{[i]}_{tmp}(k_i)$ by remaining the set of active constraints.\\
	\hspace*{0.04in} Based on $z^{[i]}(k_i)$ and $\mathcal{Z}_i$, generate a new cutting-plane which is denoted by $h_i(z^{[i]}(k_i))$ and its specific form is given in (\ref{gen_h}). Then, update local cutting-plane set according to (\ref{updateCCP}).\\
		\begin{equation}
		\label{updateCCP}
		H^{[i]}(k_i+1)=H^{[i]}_{tmp}(k_i)\cup h_i(z^{[i]}(k_i))
		\end{equation}
	\hspace*{0.02in}{\bf Step 3: Writing Phase}\\
	\hspace*{0.04in} Write $H^{[i]}(k_i+1)$ to its output cache. Update local clock by $k_i=k_i+1$.
\end{algorithm}

\begin{remark}[Asynchrony-resilience]
Note that no global clock is required in Algorithm \ref{algorithm}, implying intrinsic permission for asynchronous computation and updates. As demonstrated in Section \ref{case_study}, the proposed distributed algorithm is also resilient to other imperfect communication such as packet drops.
\end{remark}

\begin{remark}[Privacy-preserving]
The cutting-planes exchanged by processors are approximations of the feasible region $\mathcal{Z}_i$ of the surrogate problem (\ref{equivalent20}), other than the exact feasible region $\mathcal{P}_i$ or objective function $f_i(x)$ of the original CoC problem \eqref{COC}. Therefore, one's private  information will not be disclosed to  others. Moreover, since the union process in step 2 aggregates cutting-planes from  all processors, one cannot infer private information of any individual.

\end{remark}


\subsection{Distributed Generation of Cutting-planes}
The consensus on feasible region depends upon the generation of new cutting-planes. In this subsection, we omit $k_i$ for succinctness. Given the query point $z^{[i]}$ and a target set $\mathcal{Z}_i$, $h_i(z^{[i]})$ is generated as the cutting plane separating $z^{[i]}$ and $\mathcal{Z}_i$ if $z^{[i]}$ is not inside $\mathcal{Z}_i$. In order to identify if $z^{[i]}(k_i)$ is within $\mathcal{Z}_i$, processor $i$ needs to compare the values of $\mathcal{U}_i(\pi^{[i]})$ and $u_i^{[i]}$ according to the definition of $\mathcal{Z}_i$. Let $p_i^{[i]}:=\arg \mathcal{U}_i(\pi^{[i]})$ denote its optimal solution. Then processor $i$ generates the new cutting planes according to the specific from in (\ref{gen_h}).
\begin{align}
\label{gen_h}
h_i(z^{[i]}):=\left\{
\begin{array}{l}
h^{\emptyset},\ \text{if}\ u_i^{[i]}\le\mathcal{U}_i(\pi^{[i]});\\
\left\{z|u_i\le f_i(p_i^{[i]})+\pi^{\mathsf{T}}p_i^{[i]}\right\},\\
\quad\quad\text{if}\ u_i^{[i]}>\mathcal{U}_i(\pi^{[i]})\text{ and }i\in \mathcal{N}\backslash\left\{i^*\right\};\\
\left\{z|u_i\le f_i(p_i^{[i]})+\pi^{\mathsf{T}}(p_i^{[i]}-L)\right\},\\
\quad\quad\text{if}\ u_i^{[i]}>\mathcal{U}_i(\pi^{[i]})\text{ and }i=i*.
\end{array}
\right.
\end{align}

\subsection{Fully Distributed Initialization}
To develop the cutting-plane  based distributed algorithm, each local processor has to generate an initial cutting-plane set $H^{[i]}(0)$ without knowing the whole picture of the feasible region $\mathcal{S}$. To guarantee convergence, it is required that $\mathcal{S}\subset\mathcal{H}^{[i]}(0)$ and $\max_{z\in \mathcal{H}^{[i]}(0)} e^{\mathsf{T}}z<\infty$. To this end, we utilize the observation that the objective of \eqref{COC}, which represents a total costs of PEV charging, must have an upper bound in practice. Since strong duality holds for \eqref{COC} and \eqref{equivalent2}, there also exists an upper bound of the equivalent maximization problem (\ref{equivalent2}). Hence, each processor can individually choose a properly large number $M_i>0$ according to his historical data, and construct a initial cutting-plane set as
\begin{align}
\mathcal{H}^{[i]}(0)=\left\{z|e^{\mathsf{T}}z\le M_i\right\},i\in\mathcal{N}.
\end{align}

\subsection{Localized Stopping Criteria with Convergence Guarantee}
\label{sec_stopcri}
By implementing the proposed distributed algorithm, each local processor will derive a sequence of solutions during the iterations. It is crucial to find an appropriate stopping criteria for consensus. First we will introduce an empirical and centralized criterion, then extend it to a completely localized form. We will prove that the \emph{local criterion} is the sufficient condition for the \emph{global criterion}, deferring its detailed rationale, regarding optimality and feasibility, to subsection \ref{subsec_ConvergenceOptimality}. Before we start, 
the temporary objective value of processor $i$ at its $k_i$ round is denoted by
\begin{align}
\label{define_J}
J^{[i]}(k_i):&=e^{\mathsf{T}}z^{[i]}(k_i)-\rho{\Vert z^{[i]}(k_i)\Vert}^2.
\end{align}

\subsubsection{Global Criterion}
Denote the global objective error at $k$ by
\begin{align}
	\label{def_e}
	\max_{i,j\in\mathcal{N}} \vert J^{[i]}(k_i)-J^{[j]}(k_j)\vert
\end{align}
where $k_i,k_j$ are local clocks associated with the global clock $k$. Empirically, if (\ref{def_e}) is less than a pre-specified convergence tolerance, consensus on solution is regarded as been encountered and the algorithm terminates. The underlying rationale is that strict concavity of $J(\cdot)$ follows that $\vert J^{[i]}(k_i)-J^{[j]}(k_j)\vert\ge \sigma\Vert z^{[i]}(k_i)-z^{[j]}(k_j)\Vert^2$ for some $\sigma>0$\cite{ref17}. Therefore, the consensus on objective value can be approximated to that on solution $z$. Eq. \eqref{def_e}, however, is essentially a \emph{global criterion}, entailing temporary objective values from all local processors. It implies that individuals cannot implement this criterion by only accessing to local data. To circumvent this issue, a \emph{local criterion} is proposed below. 

\subsubsection{Local Criterion}
Given a pre-set tolerance $\epsilon>0$, two conditions constituting the \emph{local criterion} are given: 
\begin{condition}
\label{condition1}
For processor $i\in\mathcal{N}$, $J^{[i]}(k_i-K)-J^{[i]}\left(k_i\right)<\epsilon$ where $K:=d_{\mathcal{G}_{k}}\overline{T}$ is a constant with  $d_{\mathcal{G}_k}$ being the diameter of the communication topology $\mathcal{G}_k$ and $\overline{T}$ a parameter of $\mathcal{G}_k$ stated in Assumption \ref{assumption2}.
\end{condition}
\begin{condition}
\label{condition2}
For processor $i\in\mathcal{N}$, $u_i^{[i]}(k_i)-\mathcal{U}_i(\pi^{[i]}(k_i))<\epsilon$.
\end{condition}
Both Condition \ref{condition1} and \ref{condition2} are stated in localized form.  Condition \ref{condition1} claims to have 
stagnation on local objective updating within $K$ iterative steps\footnote[3]{This makes sense since $J^{[i]}(k_i)$ is monotonically nonincreasing with respect to $k_i$ as more constraints are added to the maximization problem while inactive constraints are pruned in every round of communication.}. Condition \ref{condition2} guarantees a bounded distance from the $z^{[i]}(k_i)$ in hand to $\mathcal{Z}_i$. The \emph{local criterion} is designed: processor $i$ stops at its local clock $k_i$ when Condition \ref{condition1} and \ref{condition2} are fulfilled with a pre-set tolerance $\epsilon>0$. 
 The \emph{local criterion} is  justified by theorem \ref{propo_criteria}.
\begin{theorem}
\label{propo_criteria}
If all processors have reached the \emph{local criterion} with Conditions \ref{condition1}-\ref{condition2} fulfilled, then the \emph{global criterion} is satisfied with $\max_{i,j\in\mathcal{N}} \vert J^{[i]}(k_i)-J^{[j]}(k_j)\vert<\epsilon$.
\end{theorem}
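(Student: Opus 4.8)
The plan is to prove, for an arbitrary pair $i,j\in\mathcal N$ and an arbitrary global clock $k$ with associated local clocks $k_i,k_j$, the one-sided bound $J^{[j]}(k_j)\le J^{[i]}(k_i-K)$. Condition~\ref{condition1} then upgrades this to $J^{[j]}(k_j)<J^{[i]}(k_i)+\epsilon$, and interchanging the roles of $i$ and $j$ and taking the maximum over the finitely many pairs yields $\max_{i,j\in\mathcal N}|J^{[i]}(k_i)-J^{[j]}(k_j)|<\epsilon$. Note that only Condition~\ref{condition1} enters this implication; Condition~\ref{condition2} is reserved for the feasibility and optimality analysis in the sequel.

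Two monotonicity facts underpin the bound. First, for every processor $m$ the sequence $J^{[m]}(\cdot)$ is nonincreasing in its local clock: appending a cutting plane in Step~2 of Algorithm~\ref{algorithm} can only shrink the admissible polyhedron and hence decrease the value of the strictly concave objective $e^{\mathsf{T}}z-\rho\|z\|^2$, while discarding the \emph{inactive} constraints leaves the optimizer $z^{[m]}$ --- and therefore $J^{[m]}$ --- intact, because the retained active constraints already supply the KKT certificate of optimality. Second, a relay inequality: whenever processor $j$ reads, at some iteration $k_j'$, the cutting-plane set that processor $m$ wrote to its output cache after iteration $\tau$, the union in \eqref{alg_1} makes $\mathcal H^{[j]}_{tmp}(k_j')$ a subset of the polyhedron induced by $m$'s (post-pruning) cuts, so $J^{[j]}(k_j')\le J^{[m]}(\tau)$ by the first fact. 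Chaining the relay inequality along a sequence of read events, and using nonincreasingness of $J$ to cover the idle stretches between events, propagates the value bound: if a time-respecting communication chain $i=m_0\to m_1\to\cdots\to m_r=j$ has all of its events inside the window $[k-K,k]$, then $J^{[j]}(k_j)\le J^{[i]}(\bar k_i)$, where $\bar k_i$ is processor $i$'s local clock at global time $k-K$.

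It remains to produce such a chain and to relate $\bar k_i$ to $k_i$. By Assumption~\ref{assumption2} the edge sets over every $\overline T$ consecutive global clocks union to a strongly connected digraph; since $K=d_{\mathcal G_k}\overline T$, the cutting planes possessed by any processor at global time $k-K$ have been incorporated by every other processor by global time $k$, which is exactly the chain required above (this propagation count for $\overline T$-strongly connected digraphs is the standard argument, cf.\ \cite{Nedic2014Distributed}, together with the definition of $d_{\mathcal G_k}$). Moreover, processor $i$'s local clock advances by at most $K$ over the $K$ global steps of the window, so $\bar k_i\ge k_i-K$, whence $J^{[i]}(\bar k_i)\le J^{[i]}(k_i-K)$ by monotonicity. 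Combining, $J^{[j]}(k_j)\le J^{[i]}(k_i-K)$, and Condition~\ref{condition1} finishes the argument.

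The crux --- the step I expect to require the most care --- is keeping the relay inequality valid in the presence of the pruning of inactive constraints: a processor forwards only its active cuts, so one must verify that what is relayed still dominates, in optimal value via the active-set/KKT observation, the sender's earlier state, and that transitively this never lets a downstream processor's $J$ rise above the original upstream value. The secondary technical point is turning ``strongly connected every $\overline T$ clocks'' into an explicit time-respecting read-chain of length at most $K$; this is routine for $\overline T$-strongly connected graphs but must be reconciled carefully with the bookkeeping of local versus global clocks.
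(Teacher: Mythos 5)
Your proposal is correct and follows essentially the same route as the paper's proof: the one-sided bound $J^{[j]}(k_j)\le J^{[i]}(k_i-K)$ you establish directly is exactly what the paper derives (via its Lemma relating local and global clocks, i.e.\ your $\bar k_i\ge k_i-K$, plus the $I_s$-propagation argument under Assumption~\ref{assumption2}), the paper merely packaging it as a contradiction; Condition~\ref{condition2} is likewise unused there. Your explicit attention to why pruning inactive cuts preserves the relay inequality is a point the paper's proof asserts without justification, so that remark is a welcome refinement rather than a divergence.
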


The proof of Theorem \ref{propo_criteria} is given in Appendix.\ref{proof_of_criteria}. Though the \emph{local criterion} may be more conservative than the global one, it only requires local information and enables a fully distributed and asynchronous implementation.

\subsection{Convergence and Optimality}
\label{subsec_ConvergenceOptimality}
Let $z^*$ and $J^*$ denote the optimal solution and optimal value of (\ref{equivalent20}) respectively. Convergence of Algorithm \ref{algorithm} is warranted by the monotonically nonincreasing objective value sequence $\{J^{[i]}(k_i)\}_{k_i}$, the lower bound of which is $J^*$. The optimality of Algorithm \ref{algorithm} is guaranteed by classic cutting-plane theory \cite{Eaves1971G,ref17}. Specifically, as the feasible region $\mathcal{S}$ is closed and compact, when Assumptions \ref{assumption2}-\ref{Slater} hold, the limit point of sequence $\{z^{[i]}(k_i)\}_{k_i}$ lies in $\mathcal{S}$, implying $J^*$ is greater than or equal to the limit of $\{J^{[i]}(k_i)\}_{k_i}$. Such being the case, the convergence and optimality of Algorithm \ref{algorithm} is ensured by
\begin{align}
\label{converg}
\lim\nolimits_{k_i\rightarrow \infty}J^{[i]}(k_i)=J^*,\forall i\in\mathcal{N}.
\end{align}

In practice, however, we are more concerned with the quality of the solutions obtained within finite rounds of iteration. Moreover, a consensus on $J$ may not necessarily imply that the optimal $J^*$ is achieved since $J^*$ is not knowable a priori for any individual processor (we will show by case studies in section \ref{case_study} how the \emph{global criterion} may fail at times). These points highlight the need for a measure to estimate the distance between a truncated solution $J^{[i]}(k_i)$ (or $z^{[i]}(k_i)$) and the exact solution $J^*$ (or $z^*$). Next we will show that, though $J^*$ (either $z^*$) does not appear in the Conditions \ref{condition1}-\ref{condition2}, the two conditions together guarantee bounded error of a truncated solution, with respect to optimality and feasibility.
\subsubsection{Optimality}
The optimality of the obtained solution can be characterized by the following theorem. 
\begin{theorem}[Optimality]
	\label{err_bound}
	Assume Conditions \ref{condition1}-\ref{condition2} hold for all processors in $\mathcal{N}$, then for any local processor $i\in\mathcal{N}$, $J^{[i]}(k_i)-J^*\ge 0$ and
	\begin{equation}
	\label{err_bound_epsilon}
	 J^{[i]}(k_i)-J^* \in O(\sqrt{\epsilon}).
	\end{equation}
\end{theorem}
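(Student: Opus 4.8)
The plan is a sandwich argument. On one side, the polyhedra maintained by Algorithm~\ref{algorithm} always contain $\mathcal S:=\cap_{i\in\mathcal N}\mathcal Z_i$: every plane produced by \eqref{gen_h}, the initial plane $e^{\mathsf T}z\le M_i$, and the sign cuts $\pi\ge0$ are valid cuts for $\mathcal S$ (each is implied by $u_j\le\mathcal U_j(\pi)$, or its $-L$ shift for $j=i^*$, which holds on $\mathcal S\subseteq\mathcal Z_j$), so $\mathcal S\subseteq\mathcal H^{[i]}_{tmp}(k_i)$ for all $i,k_i$. Since $z^{[i]}(k_i)$ maximizes the $2\rho$-strongly concave $J(z)=e^{\mathsf T}z-\rho\Vert z\Vert^2$ over this superset of $\mathcal S$, we get $J^{[i]}(k_i)\ge\max_{z\in\mathcal S}J(z)=J^*$ — the first claim. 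First-order optimality at $z^{[i]}(k_i)$ tested against $z^*\in\mathcal S$ also gives $\rho\Vert z^{[i]}(k_i)-z^*\Vert^2\le J^{[i]}(k_i)-J^*\le\Vert e\Vert^2/(4\rho)-J^*$, so every iterate lies in a fixed compact set $\mathcal K$; on $\mathcal K$, $J$ is Lipschitz (constant $L_J$) and each constraint map $g_j(z):=u_j-\mathcal U_j(\pi)$ is finite-valued ($\mathcal P_j\ne\emptyset$), convex, and Lipschitz (the supergradients of $\mathcal U_j$ are the points of the bounded set $\mathcal P_j$).

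On the other side, I show $z^{[i]}(k_i)$ is nearly feasible. Because Conditions~\ref{condition1}--\ref{condition2} hold at every processor, Theorem~\ref{propo_criteria} gives $\max_{i,j}\vert J^{[i]}(k_i)-J^{[j]}(k_j)\vert<\epsilon$, and then the strong concavity of $J$ noted after \eqref{def_e} yields the consensus estimate $\Vert z^{[i]}(k_i)-z^{[j]}(k_j)\Vert=O(\sqrt\epsilon)$ for all $i,j\in\mathcal N$. Condition~\ref{condition2} at processor $j$ reads $g_j(z^{[j]}(k_j))<\epsilon$, and $z^{[j]}(k_j)$ obeys $\pi\ge0$; carrying these to $z^{[i]}(k_i)$ through the Lipschitz bounds above and the consensus estimate gives $g_j(z^{[i]}(k_i))=O(\sqrt\epsilon)$ and $[-\pi^{[i]}(k_i)]^{+}=O(\sqrt\epsilon)$ for every $j\in\mathcal N$ simultaneously; i.e.\ $z^{[i]}(k_i)$ violates every constraint defining $\mathcal S$ by at most $O(\sqrt\epsilon)$.

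To close the loop I convert this into distance. The set $\mathcal S$ has nonempty interior (any $\pi>0$ with $u_j<\mathcal U_j(\pi)$ for all $j$ is interior, by continuity of the $\mathcal U_j$) and meets $\mathcal K$, so a classical linear error bound for systems of convex inequalities furnishes a constant $C$ with $\mathrm{dist}(z,\mathcal S)\le C\max_j\{\,g_j(z)^{+},[-\pi]^{+}\,\}$ for $z\in\mathcal K$. Applied at $z^{[i]}(k_i)$ this produces $\hat z\in\mathcal S$ with $\Vert z^{[i]}(k_i)-\hat z\Vert=O(\sqrt\epsilon)$, and then $J^*\ge J(\hat z)\ge J(z^{[i]}(k_i))-L_J\Vert z^{[i]}(k_i)-\hat z\Vert=J^{[i]}(k_i)-O(\sqrt\epsilon)$. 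Combined with $J^{[i]}(k_i)\ge J^*$ from the first paragraph, this is precisely $0\le J^{[i]}(k_i)-J^*\in O(\sqrt\epsilon)$.

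The hard part will be the error bound just invoked — turning an $O(\sqrt\epsilon)$ violation of the nonlinear convex inequalities $g_j\le0$ (and $\pi\ge0$) into an $O(\sqrt\epsilon)$ distance to $\mathcal S$, uniformly over $\mathcal K$; this leans on the interior-point property of $\mathcal S$ together with compactness, and on the fact that each $\mathcal U_j$, though merely concave and piecewise-linear, is globally Lipschitz on $\mathcal K$ because $\mathcal P_j$ is bounded. The remaining delicate ingredient, the ``information propagates within $K$ steps'' bookkeeping behind the consensus estimate, is already encapsulated in Theorem~\ref{propo_criteria} and can be used here as a black box; and the degradation from $O(\epsilon)$ to $O(\sqrt\epsilon)$ is exactly the cost of passing from the objective gap to the solution gap through strong concavity.
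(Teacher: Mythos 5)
Your proof is correct and shares the paper's overall architecture: the same sandwich $J(z^{[i]})\ge J^*\ge J(\bar z)$ with $\bar z\in\mathcal S$ near $z^{[i]}$, and the same key lemma (the paper's Lemma~\ref{C_lemma1}) showing via Theorem~\ref{propo_criteria}, strong concavity, and the Lipschitz constant $\sqrt{T}\max_j\overline p_j$ of $\mathcal U_m$ that $z^{[i]}(k_i)$ violates every $\mathcal Z_m$ by only $O(\sqrt\epsilon)$. Where you diverge is the step you yourself flag as ``the hard part'': converting the $O(\sqrt\epsilon)$ residual into an $O(\sqrt\epsilon)$ distance to $\mathcal S$ via a Robinson-type error bound for convex inequality systems with a Slater point. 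The paper sidesteps that machinery entirely by exploiting the structure of $\mathcal Z_m$: each constraint has the form $u_m\le\mathcal U_m(\pi)$ and is monotone in $u_m$ alone, so setting $\bar z:=z^{[i]}-\delta z$ with $\delta z$ subtracting $c\sqrt\epsilon$ from every $u$-coordinate and leaving $\pi$ untouched restores feasibility outright, with $\Vert\bar z-z^{[i]}\Vert=c\sqrt{n\epsilon}$ by inspection. Your route is more general (it would survive constraints coupling $\pi$ and $u$ nontrivially) but imports a nontrivial theorem where an explicit two-line construction suffices; it does have the merit of explicitly tracking the $\pi\ge0$ residual and the compactness of the iterate set, which the paper glosses over. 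Both arguments inherit the same unproved ingredient, namely the inequality $\vert J^{[i]}-J^{[j]}\vert\ge\sigma\Vert z^{[i]}-z^{[j]}\Vert^2$ asserted after \eqref{def_e}, which strictly speaking requires the two feasible polyhedra to be comparable; since you use it exactly as the paper does, this is not a defect of your proposal relative to the paper's proof.
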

\subsubsection{Feasibility}
There is no guaranteed feasibility of solution sequence in cutting-plane based algorithms. In other words, the obtainable consensus result within finite rounds may be very close to $\mathcal{S}$ but still  outside $\mathcal{S}$. To address this problem, we start with the situation that $z^{[i]}(k_i)$ is feasible, i.e., $z^{[i]}(k_i)\in\mathcal{S}$. Then we can easily infer from Theorem \ref{err_bound} that $J^{[i]}(k_i)=J^*$. This implies $z^{[i]}(k_i)$ must be the  optimal solution to (\ref{equivalent20}). If unfortunately  $z^{[i]}(k_i)$ is infeasible, it is revealed in Theorem \ref{pro_feasi} that $z^{[i]}(k_i)$ can be close enough to a feasible and quasi-optimal solution.
 
\begin{theorem}[Feasibility]
\label{pro_feasi}
Assume Conditions \ref{condition1}-\ref{condition2} hold for all processors in $\mathcal{N}$. Then for each local processor $i\in\mathcal{N}$, there exists a feasible solution $\bar{z}\in\mathcal{S}$ which is close enough to $z^{[i]}(k_i)$ with $\Vert z^{[i]}(k_i)-\bar{z} \Vert^2\le B\epsilon$ where $B$ is a positive constant, and nearly optimal with $\vert J(\bar{z})-J^*\vert\in O(\sqrt{\epsilon})$.
\end{theorem}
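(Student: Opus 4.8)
The plan is to construct the feasible point $\bar z$ explicitly by projecting the mild constraint violations of $z^{[i]}(k_i)$ back onto $\mathcal{S}=\cap_{i\in\mathcal N}\mathcal Z_i$, and then to control both the displacement $\|z^{[i]}(k_i)-\bar z\|$ and the objective gap $|J(\bar z)-J^*|$ in terms of $\epsilon$. First I would observe that, by Condition~\ref{condition1} together with the information-propagation argument already used for Theorem~\ref{propo_criteria}, all processors' query points are $O(\sqrt\epsilon)$-close to one another and to the current processor's $z^{[i]}(k_i)$ (the strict concavity of $J$ converts the $\epsilon$-gap in objective into an $O(\sqrt\epsilon)$-gap in the $z$-variable, exactly as in the rationale stated for the \emph{global criterion}). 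Hence it suffices to bound the distance from $z^{[i]}(k_i)$ to $\mathcal{S}$; the remaining quantities follow from the triangle inequality.

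Next I would show $\mathrm{dist}(z^{[i]}(k_i),\mathcal{S})^2 \le B'\epsilon$ for a constant $B'$. The only way $z=z^{[i]}(k_i)$ can fail to lie in $\mathcal{Z}_j$ is through the inequality $u_j \le f_j(p_j)+\pi^{\mathsf T}p_j$ (or the $i^*$-variant with $p_j-L$): the constraint $\pi\ge 0$ is enforced exactly by every local LP in Step~2, so only the $u_j$-coordinates can be violated, and by the definition of $\mathcal U_j$ the violation of $\mathcal{Z}_j$ equals precisely $\big(u_j^{[j]}(k_j)-\mathcal U_j(\pi^{[j]}(k_j))\big)_+$. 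Here I would need to transfer this bound from processor $j$'s own iterate (where Condition~\ref{condition2} gives $u_j^{[j]}(k_j)-\mathcal U_j(\pi^{[j]}(k_j))<\epsilon$) to the common query point $z^{[i]}(k_i)$: since $\mathcal U_j$ is concave (indeed Lipschitz on the relevant compact set, the $\pi$-component being bounded because $\mathcal H^{[j]}(0)$ caps $e^{\mathsf T}z$ and $\pi\ge0$), the $O(\sqrt\epsilon)$-closeness of $\pi^{[i]}(k_i)$ to $\pi^{[j]}(k_j)$ and of $u_i$-components yields $u_j - \mathcal U_j(\pi)\le \epsilon + O(\sqrt\epsilon)$ evaluated at $z^{[i]}(k_i)$. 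Aggregating the $n$ coordinatewise violations and projecting onto $\mathcal{S}$ then gives $\|z^{[i]}(k_i)-\bar z\|^2\le B\epsilon$ for a suitable constant $B$ depending on $n$ and the Lipschitz moduli; I would define $\bar z$ to be this Euclidean projection of $z^{[i]}(k_i)$ onto $\mathcal{S}$, which lies in $\mathcal{S}$ by closedness and compactness of $\mathcal{S}$.

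Finally, for near-optimality I would combine $J^{[i]}(k_i)-J^*\in O(\sqrt\epsilon)$ from Theorem~\ref{err_bound} with Lipschitz continuity of $J$ on the compact feasible-adjacent region: $|J(\bar z)-J^{[i]}(k_i)|\le L_J\|\bar z-z^{[i]}(k_i)\|\le L_J\sqrt{B\epsilon}\in O(\sqrt\epsilon)$, hence $|J(\bar z)-J^*|\le |J(\bar z)-J^{[i]}(k_i)|+|J^{[i]}(k_i)-J^*|\in O(\sqrt\epsilon)$. The main obstacle I anticipate is the second step: rigorously transferring Condition~\ref{condition2}, which is stated only at processor $j$'s \emph{own} iterate $z^{[j]}(k_j)$, to the shared query point $z^{[i]}(k_i)$ and then controlling the projection distance --- this requires a careful, uniform Lipschitz bound on each $\mathcal U_j(\cdot)$ over the (a priori bounded) range of $\pi$ that the algorithm can produce, and a quantitative estimate of how far the query points can drift within $K=d_{\mathcal G_k}\overline T$ steps, which is where Assumption~\ref{assumption2} and the monotone-decrease Footnote~3 enter. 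Everything else is elementary projection geometry and the triangle inequality.
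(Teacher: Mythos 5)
Your proposal is correct and follows essentially the same route as the paper's Appendix~B: Theorem~\ref{propo_criteria} plus the strict-concavity bound gives $O(\sqrt{\epsilon})$-closeness of the query points, Condition~\ref{condition2} is transferred from processor $m$'s own iterate to $z^{[i]}(k_i)$ via the Lipschitz bound $\Vert p_m\Vert\le\sqrt{T}\,\overline{p}_m$ on the slopes of $\mathcal{U}_m$ (the paper's Lemma~\ref{C_lemma1}), and a feasible $\bar z$ within $O(\sqrt{\epsilon})$ is obtained by correcting only the $u$-coordinates. The only cosmetic differences are that the paper constructs $\bar z$ explicitly as $z^{[i]}-\delta z$ with $\delta z$ shifting each $u$-component by $c\sqrt{\epsilon}$ (rather than taking the Euclidean projection, which your construction dominates anyway) and bounds $\vert J(\bar z)-J^*\vert$ by the sandwich $J(z^{[i]})\ge J^*\ge J(\bar z)$ instead of invoking Theorem~\ref{err_bound} plus Lipschitz continuity of $J$.
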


The proof of Theorem \ref{err_bound} and \ref{pro_feasi} is given in Appendix.B.

\section{Case Studies}
\label{case_study}
In this section, we test the performance of the proposed method and compare it with the celebrated ADMM algorithm. Simulations are carried out on the IEEE 37- and IEEE 123-node feeders\cite{ref20}, with MATLAB on a laptop with Intel(R) Core(TM) i5-5200U 2.20GHz CPU and 4GB of RAM.
\subsection{Setup}
\begin{figure}[!ht]
\centering
\includegraphics[scale=0.1]{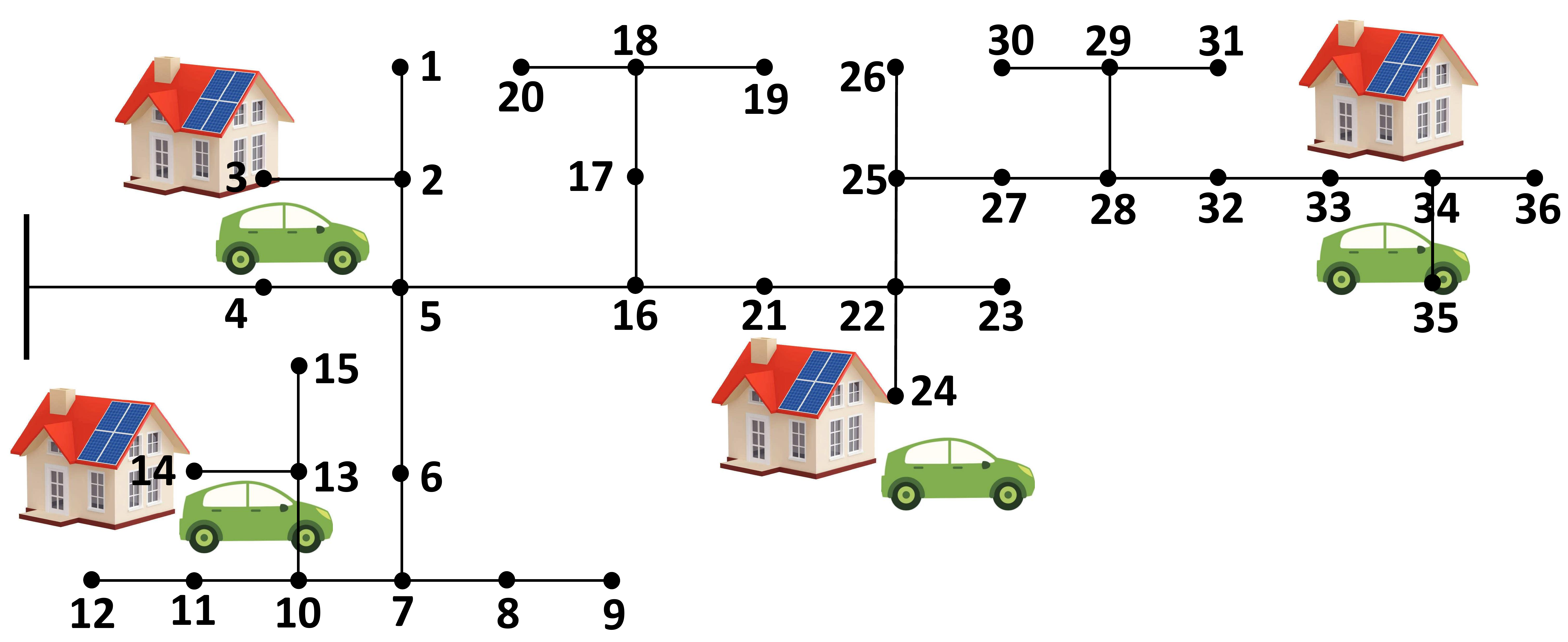}
\caption{Distributed network diagram of IEEE 37-node Test Feeder.}
\label{sysIEEE37}
\end{figure}
\textit{1) Physical and Communication Networks:} We consider two typical radial residential distribution networks: the IEEE 37-bus and 123-bus test feeders. The topology of the former system is depicted in Fig.\ref{sysIEEE37}, while the latter is omitted for space limitation. For both the systems, suppose that load of one household and one PEV is located at each bus. The feeder head (bus 4) is the only power supplier and its maximal capacity is set as the peak load without PEVs. We use the proposed method to coordinate PEVs' charging to avoid overload. The communication topology of charging PEVs is chosen similar to the power network topology for simplicity\footnote[4]{This setting is made solely for the clarity of presentation. Theoretically, the communication topology can be arbitrary provided Assumption \ref{assumption2} holds.}. We use real data from the hourly residential load profile of Los Angeles\cite{ref21} as the baseline load and scale it to match the household numbers. The information on hourly day-ahead electricity prices  comes from California ISO\cite{ref22}.

\textit{2) PEV Specifications:}
The  parameters of PEVs are given: Battery capacities lie in a uniform distribution between 18 kW.h to 20 kW.h\cite{ref23}. The scheduling horizon is from 5:00 pm to 9:00 am in the next day and is divided into 16 time slots hour-by-hour. Accordingly, we assume that the arrival and departure time of PEVs in the test cases lie in 5:00 pm-9:00 am and their hourly probability distributions are determined according to \cite{ref24}. 
Initial and designated SOC are uniformly distributed in $[0.3,0.5]$ and $[0.7,0.9]$ respectively\cite{Liu2017Decentralized}. The maximum charge power is set as 3.3 kW for Level \RNum{2} charger. A charging efficiency of 0.9 is  considered.

\subsection{Optimality}
In this case, Algorithm \ref{algorithm} is applied to the IEEE 37-node test feeder. Results are presented in Fig.\ref{net37load}, compared with  two common uncoordinated charging modes:
\begin{itemize}
\item \emph{mode (\romannumeral1)}: PEVs start charging immediately when they arrive and until the designated SOC is reached.
\item \emph{mode (\romannumeral2)}: PEVs optimize charging cost on their own without coordinating with others.
\end{itemize}
A total charging cost of \$109.12, \$130.60 and \$105.06 is achieved under coordinated charging, uncoordinated mode (\romannumeral1) and (\romannumeral2) respectively. As shown in Fig.\ref{net37load}, under mode (\romannumeral1), a new peak load is imposed on the baseload profile around evening rush hours, which gives rise to a great burden on the feeder head. Moreover, mode (\romannumeral1) costs the most because of charging during high price periods, which is uneconomical. As for mode (\romannumeral2), a minimal charging cost is achieved, however, with there being a new peak load around 2 am to 3 am. The aggregation charging behavior during low price period around midnight also threatens the system security, although it is in off-peak time for non-EV baseload. The proposed coordinated charging meets system constraints at minimum cost, striving a balance between security and economic efficiency. It tries to schedule PEV load to off-peak time as well as avoids overload on the feeder head. An observation is made that the strategy plays a role in `valley-filling' of total load profile as a consequence when PEV load and baseload are roughly on the same scale. The outcomes of the distributed algorithm are coincident with that of the centralized method, which is solved by  CPLEX.
\begin{figure}[!ht]
	\centering
	\includegraphics[width=0.32\textwidth, height=1.5in]{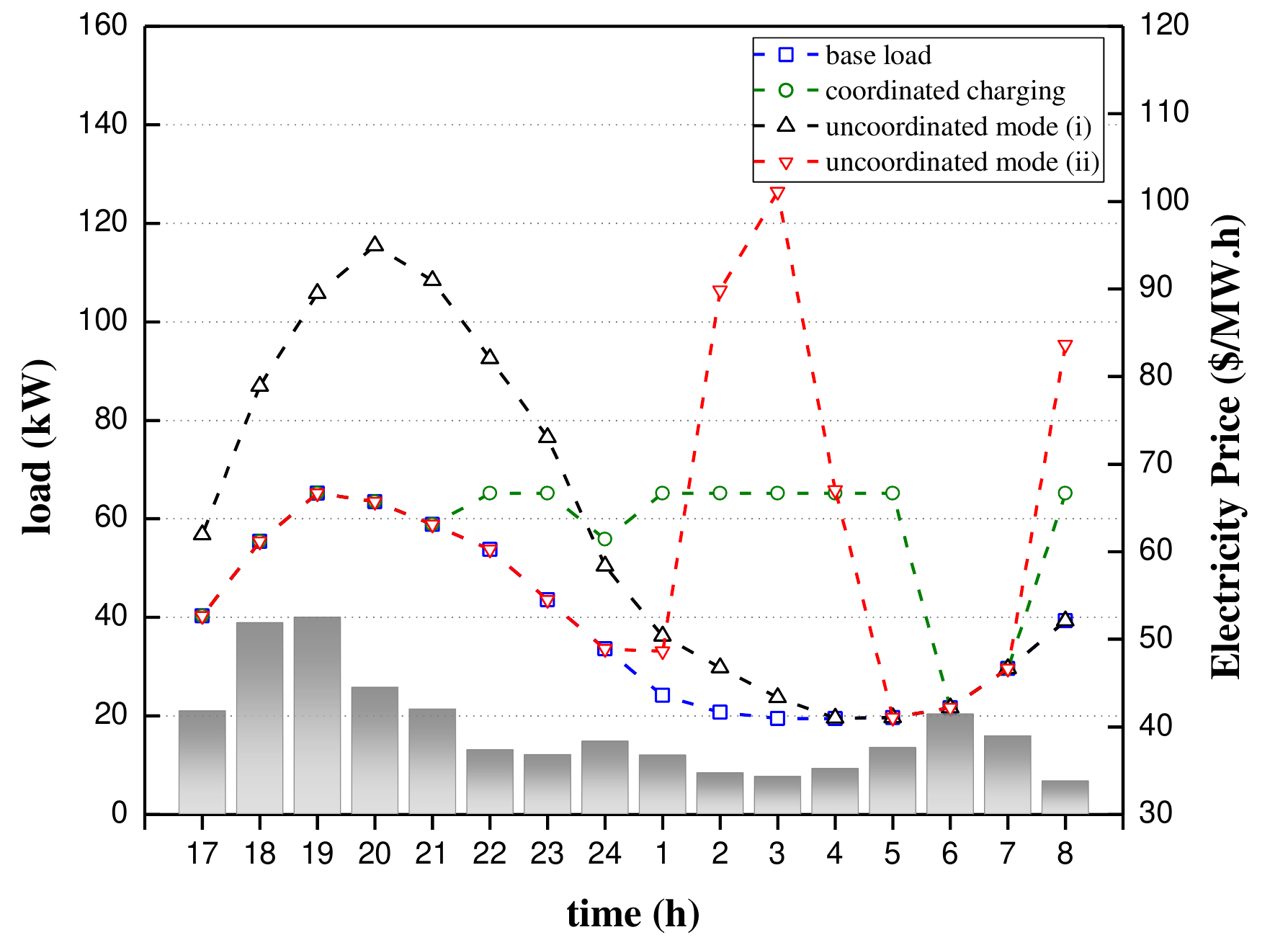}
	\caption{Total load profile under coordinated/uncoordinated charging modes.}
	\label{net37load}
\end{figure}
\subsection{Convergence}
Fig.\ref{convergence}(left) shows the iterative process of several selected nodes, converging to the optimal $J^*$. An observation is made on the different initial values of the optimization objective for individual processors. This is because they independently choose their $M_i$ uniformly in $[150,200]$ to generate their initial cutting plane set. Also, it is observed that $J^{[i]}(k)$ of each processor is monotonously nonincreasing along with the communication round $k$, according with theoretical analysis.

To demonstrate performance of the different criterion, we show the evolution of four alternative errors: (\romannumeral1) $e_{\text{\uppercase\expandafter{\romannumeral1}}}(k)=\max_{i}\vert J^{[i]}(k)-J^*\vert$; (\romannumeral2) $e_{\text{\uppercase\expandafter{\romannumeral2}}}(k)$ which is exactly the \emph{global criterion} in (\ref{def_e}); (\romannumeral3) $e_{\text{\uppercase\expandafter{\romannumeral3}}}(k)=\max_{i}\{J^{[i]}(k-K)-J^{[i]}(k)\}$ where $K=15$ in this case, and (\romannumeral4) $e_{\text{\uppercase\expandafter{\romannumeral4}}}(k)=\max_{i}\{u_i^{[i]}(k)-\mathcal{U}_i(\pi^{[i]}(k))\}$, as in Fig.\ref{convergence}(right). $e_{\text{\uppercase\expandafter{\romannumeral1}}}(k)$ essentially characterizes the convergence performance of the algorithm, however, entails the optimal $J^*$ being known a priori. Note that the $e_{\text{\uppercase\expandafter{\romannumeral3}}}(k)$ associated with Condition.\ref{condition1}
and $e_{\text{\uppercase\expandafter{\romannumeral4}}}(k)$ associated with Condition.\ref{condition2} together give a sketch of the proposed \emph{local criterion}. As shown in Fig.\ref{convergence}(right), the algorithm converges after about 18 rounds of iteration, with $e_{\text{\uppercase\expandafter{\romannumeral1}}}(k)$ reduced to less than $10^{-3}$. Notice that the \emph{global criterion} fails in this case, since the $e_{\text{\uppercase\expandafter{\romannumeral2}}}(k)$ reduces almost to $0$ (the shaded box in Fig.\ref{convergence}(right)) whereas achieving consensus on a non-optimal objective value. The \emph{local criterion} with $\epsilon=10^{-3}$ is met at round 33 with $e_{\text{\uppercase\expandafter{\romannumeral3}}}(k)<\epsilon$ (namely, Condition.\ref{condition1} satisfied by all processors) and $e_{\text{\uppercase\expandafter{\romannumeral4}}}(k)<\epsilon$ (namely, Condition.\ref{condition2} fulfilled by all). Though the \emph{local criterion} may be conservative due to Condition.\ref{condition1}, it provides guaranteed optimality of the consensus outturn, as opposed to the empirical \emph{global criterion}.

\begin{figure}[!ht]
	\centering
	\includegraphics[height=1.45in]{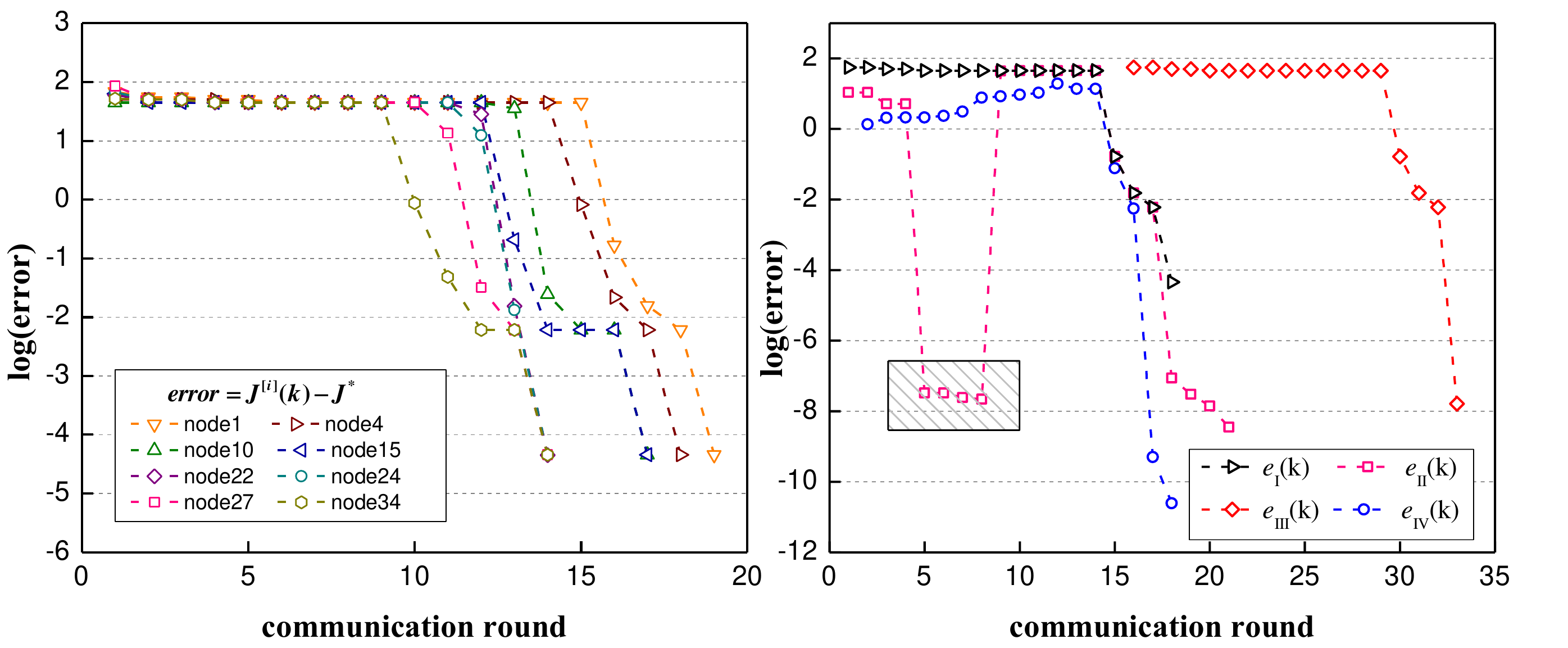}
	\caption{Evolution of the algorithm (infinitely negative values are omitted).}
	\label{convergence}
\end{figure}
We adopt the celebrated ADMM algorithm
in \cite{distributedADMM2}, which is also center-free, for comparison. Also, case of IEEE 123-node feeder is tested to showcase the scalability to large problems. The error tolerance is set as $\epsilon=10^{-3}$. Table \RNum{2} compares the convergence performances of the two methods. It is observed that the proposed algorithm  needs only  half of the communication rounds of ADMM to converge, showing a better convergence performance. 
\begin{table}[!h]
	\centering
	\caption{Contrast of Communication Rounds ($\epsilon=10^{-3}$)}
	\begin{tabular}{lcccc}
		\toprule
		\multirow{2}{*}{Cases}       &\multicolumn{3}{c}{The Proposed Method}     &ADMM\\
		               &$e_{\text{\uppercase\expandafter{\romannumeral1}}}<\epsilon$ & $e_{\text{\uppercase\expandafter{\romannumeral3}}}<\epsilon$ (Cond.\ref{condition1})& $e_{\text{\uppercase\expandafter{\romannumeral4}}}<\epsilon$ (Cond.\ref{condition2})& $e_{\text{\uppercase\expandafter{\romannumeral1}}}<\epsilon$\\
		\midrule
		IEEE37        &18&33&16               &51\\
		IEEE123       &42&66&42               &113\\
		\bottomrule
	\end{tabular}
\end{table}

\subsection{Performance of Asynchronous Charging}
In this subsection, the robustness of the algorithm to communication delay, packet loss and communication topology changes are tested in IEEE 37-node feeder.

\subsubsection{Communication Delay and  Packet Losses}
We assume that in each time step, each communication link has a delay of one time step with a probability of 10$\%$. Also we suppose that the packet loss probability is 10$\%$ for  the transmitted cutting-planes. Results  given  in Fig.\ref{results_delay}(left) evidently illustrate a satisfactory performance of the proposed algorithm even with communication delay and package losses.

\begin{figure}[!ht]
	\centering
	\includegraphics[height=1.35in]{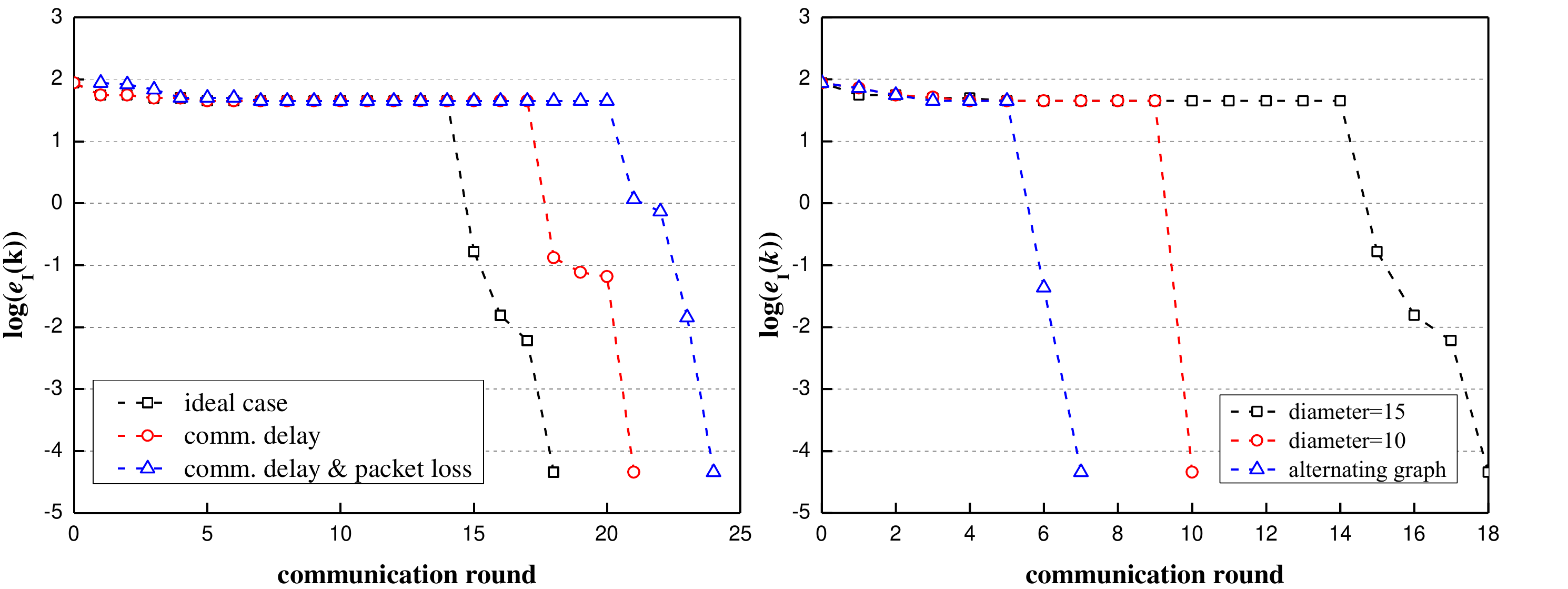}
	\caption{Performance of the algorithm characterized by $e_{\text{I}}$ under: (left) comm. delay and packet loss, and (right) different comm. topologies.}
	\label{results_delay}
\end{figure}

\subsubsection{Topology  Varying of Communication Network }
 The original IEEE 37-node feeder topology has a diameter of 15. Consider another topology with a diameter of 10. Cases with different communication topologies are tested and the comparison of convergence performances are shown in Fig.\ref{results_delay}(right). It is observed that a static graph with bigger diameter calls for more iteration rounds to converge. The rationale for this observation is that the diameter of the communication topology determines the longest time needed for passing cutting-planes from one node to another indirectly. Surprisingly, when alternating the two topologies in coordinated charging (with Assumption.\ref{assumption2} hold always), the algorithm converges even faster than the cases with static topologies. This fact indicates that, with the proposed algorithm, topology varying may even accelerate the cutting-plane passing process in the network, which could  facilitate the convergence.

\subsection{Plug-and-Play Operation}
The proposed algorithm is tested on the IEEE 37-node system in a plug-and-play operation. Classify the 36 nodes into two parts $\mathcal{N}_1=\left\{1,2,\ldots,20\right\}$ and $\mathcal{N}_2=\left\{21,22,\ldots,36\right\}$. At the beginning, only PEVs in $\mathcal{N}_1$ coordinate on charging. Nodes in $\mathcal{N}_2$ participate at the $16$ round. Results are shown in Fig.\ref{results_change}, showing that new players can join in at any time, which well supports the plug-and-play operation.
\begin{figure}[!ht]
	\centering
	\includegraphics[height=1.5in]{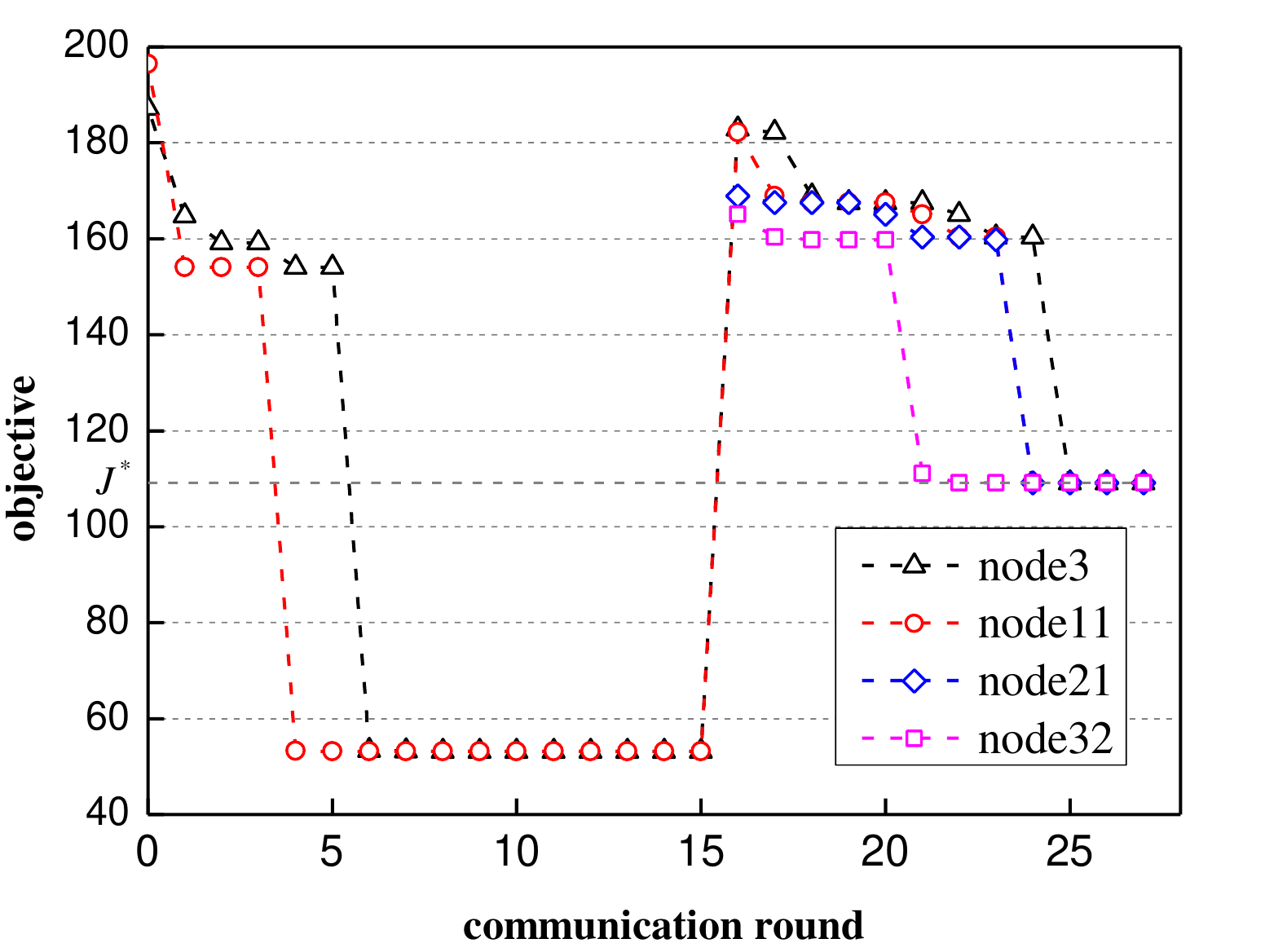}
	\caption{Evolution of local objective value of selected nodes under plug and play setting.}
	\label{results_change}
\end{figure}
\section{Conclusion}
In this paper, we have derived a cutting-plane based method to fulfill an optimal distributed coordination of PEV charging under local and global constraints. The proposed method strives the minimal overall charging cost without violating  feeder head capacity, which is in accordance to  the  result of  centralized global optimization.  During the  PEV charging, private information of individual PEVs can be well protected. It  performs resiliently under various kinds of asynchrony in practice such as time delays, packet losses, and  topology changes. 
We hope this work can promote an alternative path toward a more practical protocol for PEV charging and other distributed coordination problems.

As an initial study, uncertainties of renewable resources and inaccuracy of baseload  forecast have not been taken into account in this work. Extending the proposed cutting-plane based distributed coordination framework to incorporate uncertainties are among our ongoing  works.

\ifCLASSOPTIONcaptionsoff
  \newpage
\fi
\bibliographystyle{IEEEtran}
\bibliography{IEEEabrv,mybib}

\begin{thebibliography}{10}
\providecommand{\url}[1]{#1}
\csname url@samestyle\endcsname
\providecommand{\newblock}{\relax}
\providecommand{\bibinfo}[2]{#2}
\providecommand{\BIBentrySTDinterwordspacing}{\spaceskip=0pt\relax}
\providecommand{\BIBentryALTinterwordstretchfactor}{4}
\providecommand{\BIBentryALTinterwordspacing}{\spaceskip=\fontdimen2\font plus
\BIBentryALTinterwordstretchfactor\fontdimen3\font minus
  \fontdimen4\font\relax}
\providecommand{\BIBforeignlanguage}[2]{{%
\expandafter\ifx\csname l@#1\endcsname\relax
\typeout{** WARNING: IEEEtran.bst: No hyphenation pattern has been}%
\typeout{** loaded for the language `#1'. Using the pattern for}%
\typeout{** the default language instead.}%
\else
\language=\csname l@#1\endcsname
\fi
#2}}
\providecommand{\BIBdecl}{\relax}
\BIBdecl

\bibitem{Gan2013Optimal}
L.~Gan, U.~Topcu, and S.~H. Low, ``Optimal decentralized protocol for electric
  vehicle charging,'' \emph{IEEE Transactions on Power Systems}, vol.~28,
  no.~2, pp. 940--951, 2013.

\bibitem{ref3}
J.~A.~P. Lopes, F.~J. Soares, and P.~M.~R. Almeida, ``Integration of electric
  vehicles in the electric power system,'' \emph{Proceedings of the IEEE},
  vol.~99, no.~1, pp. 168--183, 2011.

\bibitem{Engel2013Privacy}
D.~Engel, ``Privacy and security challenges in the smart grid user domain,'' in
  \emph{Proceedings of the First ACM Workshop on Information Hiding \&
  Multimedia Security}, 2013.

\bibitem{Omran2017A}
N.~G. Omran and S.~Filizadeh, ``A semi-cooperative decentralized scheduling
  scheme for plug-in electric vehicle charging demand,'' \emph{International
  Journal of Electrical Power \& Energy Systems}, vol.~88, no. Complete, pp.
  119--132, 2017.

\bibitem{Turitsyn2010Robust}
K.~Turitsyn, N.~Sinitsyn, S.~Backhaus, and M.~Chertkov, ``Robust
  broadcast-communication control of electric vehicle charging,'' in
  \emph{First IEEE International Conference on Smart Grid Communications},
  2010.

\bibitem{Lopes2009Identifying}
J.~A.~P. Lopes, F.~J. Soares, and P.~M.~R. Almeida, ``Identifying management
  procedures to deal with connection of electric vehicles in the grid,'' in
  \emph{IEEE Bucharest PowerTech}, 2009.

\bibitem{Vaya2012Centralized}
M.~G. Vaya and G.~Andersson, ``Centralized and decentralized approaches to
  smart charging of plug-in vehicles,'' in \emph{Power \& Energy Society
  General Meeting}, 2012.

\bibitem{Jiang2013Decentralized}
B.~Jiang and Y.~Fei, ``Decentralized scheduling of pev on-street parking and
  charging for smart grid reactive power compensation,'' in \emph{IEEE PES
  ISGT}, 2013.

\bibitem{Liu2017Decentralized}
M.~Liu, P.~K. Phanivong, S.~Yang, and D.~S. Callaway, ``Decentralized charging
  control of electric vehicles in residential distribution networks,''
  \emph{IEEE Transactions on Control Systems Technology}, vol.~27, no.~1, pp.
  266--281, 2019.

\bibitem{Ardakanian2013Distributed}
O.~Ardakanian, C.~Rosenberg, and S.~Keshav, ``Distributed control of electric
  vehicle charging,'' in \emph{International Conference on Future Energy
  Systems}, 2013.

\bibitem{Liang2016Scalable}
L.~Zhang, V.~Kekatos, and G.~B. Giannakis, ``Scalable electric vehicle charging
  protocols,'' \emph{IEEE Transactions on Power Systems}, vol.~32, no.~2, pp.
  1451--1462, 2017.

\bibitem{Vaya2015Decentralized}
M.~G. Vaya, G.~Andersson, and S.~Boyd, ``Decentralized control of plug-in
  electric vehicles under driving uncertainty,'' in \emph{IEEE PES ISGT
  Europe}, 2014.

\bibitem{Kraning2014Dynamic}
M.~Kraning, ``Dynamic network energy management via proximal message passing,''
  \emph{Foundations \& Trends in Optimization}, vol.~1, no.~2, pp. 73--126,
  2014.

\bibitem{Rivera2013Alternating}
J.~Rivera, P.~Wolfrum, S.~Hirche, C.~Goebel, and H.~A. Jacobsen, ``Alternating
  direction method of multipliers for decentralized electric vehicle charging
  control,'' in \emph{52nd IEEE Conference on Decision \& Control}, 2013.

\bibitem{Ma2012Decentralized}
Z.~Ma, D.~S. Callaway, and I.~A. Hiskens, ``Decentralized charging control of
  large populations of plug-in electric vehicles,'' \emph{IEEE Transactions on
  Control Systems Technology}, vol.~21, no.~1, pp. 67--78, 2012.

\bibitem{Parise2014Mean}
F.~Parise, M.~Colombino, S.~Grammatico, and J.~Lygeros, ``Mean field
  constrained charging policy for large populations of plug-in electric
  vehicles,'' in \emph{53rd IEEE Conference on Decision \& Control}, 2014.

\bibitem{Logenthiran2012Multi}
T.~Logenthiran, D.~Srinivasan, A.~M. Khambadkone, and H.~N. Aung, ``Multi-agent
  system for real-time operation of a microgrid in real-time digital
  simulator,'' \emph{IEEE Transactions on Smart Grid}, vol.~3, no.~2, pp.
  925--933, 2012.

\bibitem{MohammadiA}
J.~Mohammadi, G.~Hug, and S.~Kar, ``A fully distributed cooperative charging
  approach for plug-in electric vehicles,'' \emph{IEEE Transactions on Smart
  Grid}, vol.~9, no.~4, pp. 3507--3518, 2018.

\bibitem{Rahbari2014Cooperative}
N.~Rahbari-Asr and M.~Y. Chow, ``Cooperative distributed demand management for
  community charging of phev/pevs based on kkt conditions and consensus
  networks,'' \emph{IEEE Transactions on Industrial Informatics}, vol.~10,
  no.~3, pp. 1907--1916, 2014.

\bibitem{Xu2015Optimal}
Y.~Xu, ``Optimal distributed charging rate control of plug-in electric vehicles
  for demand management,'' \emph{IEEE Transactions on Power Systems}, vol.~30,
  no.~3, pp. 1536--1545, 2015.

\bibitem{Low1999}
S.~H. {Low} and D.~E. {Lapsley}, ``Optimization flow control. i. basic
  algorithm and convergence,'' \emph{IEEE/ACM Transactions on Networking},
  vol.~7, no.~6, pp. 861--874, 1999.

\bibitem{Wang2018D}
Z.~Wang, S.~Mei, L.~Feng, P.~Yi, and M.~Cao, ``Asynchronous distributed power
  control of multi-microgrid systems,'' \emph{arXiv preprint arXiv:1810.11998},
  2018.

\bibitem{HaleAsynchronous}
M.~T. Hale, A.~Nedich, and M.~Egerstedt, ``Asynchronous multi-agent primal-dual
  optimization,'' \emph{IEEE Transactions on Automatic Control}, vol.~62,
  no.~9, pp. 4431--4435, 2017.

\bibitem{Eaves1971G}
B.~Eaves and W.~Zangwill, ``Generalized cutting plane algorithms,'' \emph{SIAM
  Journal on Control}, vol.~9, no.~4, pp. 529--542, 1971.

\bibitem{Liu2016Fully}
M.~Liu, J.~Zhu, L.~Li, and W.~Zhao, ``Fully decentralized multi-area dynamic
  economic dispatch for large-scale power systems via cutting plane
  consensus,'' \emph{IET Generation Transmission \& Distribution}, vol.~10,
  no.~10, pp. 2486--2495, 2016.

\bibitem{ref17}
M.~B\"urger, G.~Notarstefano, and F.~Allg\"ower, ``A polyhedral approximation
  framework for convex and robust distributed optimization,'' \emph{IEEE
  Transactions on Automatic Control}, vol.~59, no.~2, pp. 384--395, 2014.

\bibitem{distributedADMM2}
E.~{Wei} and A.~{Ozdaglar}, ``Distributed alternating direction method of
  multipliers,'' in \emph{51st IEEE Conference on Decision and Control}, 2012.

\bibitem{Nedic2014Distributed}
A.~Nedic and A.~Olshevsky, ``Distributed optimization over time-varying
  directed graphs,'' \emph{IEEE Transactions on Automatic Control}, vol.~60,
  no.~3, pp. 601--615, 2014.

\bibitem{boyd2004convex}
S.~Boyd and L.~Vandenberghe, \emph{Convex optimization}.\hskip 1em plus 0.5em
  minus 0.4em\relax Cambridge university press, 2004.

\bibitem{Bo2010Distributed}
Y.~Bo and M.~Johansson, ``Distributed optimization and games: A tutorial
  overview,'' \emph{Networked Control Systems}, vol. 406, pp. 109--148, 2010.

\bibitem{ref19.2}
O.~L. Mangasarian and R.~R. Meyer, ``Nonlinear perturbation of linear
  programs,'' \emph{Siam Journal on Control \& Optimization}, vol.~17, no.~6,
  pp. 745--752, 1978.

\bibitem{Dantzig}
G.~B.~Dantzig and P.~Wolfe, ``The decomposition algorithm for linear
  programs,'' \emph{Econometrica}, vol.~29, no.~4, pp. 767--778, 1961.

\bibitem{ref20}
\BIBentryALTinterwordspacing
{IEEE PES AMPS DSAS Test Feeder Working Group}. [Online]. Available:
  \url{{http://sites.ieee.org/pes-testfeeders/resources/}}
\BIBentrySTDinterwordspacing

\bibitem{ref21}
\BIBentryALTinterwordspacing
{Open EI Datasets}. [Online]. Available:
  \url{{https://openei.org/datasets/dataset/}}
\BIBentrySTDinterwordspacing

\bibitem{ref22}
\BIBentryALTinterwordspacing
{Energy Online}. [Online]. Available: \url{{http://www.energyonline.com/Data/}}
\BIBentrySTDinterwordspacing

\bibitem{ref23}
\BIBentryALTinterwordspacing
{UEP Agency}. [Online]. Available:
  \url{{http://www.fueleconomy.gov/feg/evsbs.shtml}}
\BIBentrySTDinterwordspacing

\bibitem{ref24}
N.~G. Omran and S.~Filizadeh, ``A semi-cooperative decentralized scheduling
  scheme for plug-in electric vehicle charging demand,'' \emph{International
  Journal of Electrical Power \& Energy Systems}, vol.~88, no. Complete, pp.
  119--132, 2017.

\end{thebibliography}


%
%
%



\newpage

\appendices

\section{Proof of the Theorem \ref{propo_criteria}}
\label{proof_of_criteria}
Let $t^{[i]}(k_i)$ denote the mapping from processor $i$'s local clock $k_i$ to its corresponding unique universal time and $t(k)$ denote the mapping from global clock $k$ to its corresponding unique universal time. $\forall i,k$, we define
\begin{equation}
Q^{[i]}(k):=J^{[i]}(k_i)
\end{equation}
where $t^{[i]}(k_i)\le t(k)< t^{[i]}(k_i+1)$. Since $J^{[i]}(k_i)$ is monotonously non-increasing with respect to $k_i$, $Q^{[i]}(k)$ is also monotonously non-increasing with respect to $k$. 
\begin{lemma}
	\label{B_lemma}
	For any $i,k_i$ and $k$, if $t^{[i]}(k_i)\le t(k)< t^{[i]}(k_i+1)$, then $J^{[i]}(k_i-K)\ge Q^{[i]}(k-K)$ where $K\in \mathbb{Z}^{+}$ is an arbitrary constant.
\end{lemma}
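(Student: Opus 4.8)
The plan is to unwind the definition of $Q^{[i]}$ twice: once at the un-shifted time $k$ to pin down the local-clock index there, and once at the shifted time $k-K$, and then compare the two local-clock indices. First I would set $Q^{[i]}(k)=J^{[i]}(k_i)$ by hypothesis, so the pair $(k_i,k)$ satisfies $t^{[i]}(k_i)\le t(k)<t^{[i]}(k_i+1)$. Next I would identify the local-clock round $k_i'$ associated with the global clock $k-K$ through the defining relation $t^{[i]}(k_i')\le t(k-K)<t^{[i]}(k_i'+1)$, so that $Q^{[i]}(k-K)=J^{[i]}(k_i')$. The goal then reduces to showing $J^{[i]}(k_i-K)\ge J^{[i]}(k_i')$, and since $J^{[i]}(\cdot)$ is monotonically non-increasing in its local-clock argument (more constraints are added while inactive ones are pruned), it suffices to prove the index inequality $k_i-K\le k_i'$, i.e. $k_i'\ge k_i-K$.

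To get $k_i'\ge k_i-K$, I would argue that the local clock cannot advance by more than the elapsed global time. Concretely, $t(k-K)<t(k)$, and a processor's local clock is incremented by exactly one each time it awakes; between universal time $t(k-K)$ and $t(k)$ at most $K$ global ticks have elapsed, and each global tick corresponds to at most one awakening of processor $i$ (in the model of \cite{Wang2018D} the global clock advances whenever any single processor acts, so processor $i$ acts at a subset of the global ticks in that window). Hence the number of times processor $i$ has updated between those two universal times is at most $K$, giving $k_i-k_i'\le K$, equivalently $k_i'\ge k_i-K$. Combined with monotonicity of $J^{[i]}$ this yields $J^{[i]}(k_i-K)\ge J^{[i]}(k_i')=Q^{[i]}(k-K)$, which is the claim.

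I expect the main obstacle to be making the index-counting argument airtight against the precise semantics of the global clock in \cite{Wang2018D}: one must be careful that ``global clock $k$'' really does increment by one per elementary processor action (so that $K$ global ticks bound the number of local ticks of any one processor by $K$), and also handle the boundary case where $k-K$ is non-positive or where processor $i$ has performed fewer than $K$ updates total, in which case $k_i-K$ may be non-positive and $J^{[i]}(k_i-K)$ should be read as its initial value $J^{[i]}(0)$ (or $+\infty$ before initialization), which only makes the inequality easier. A secondary point worth stating explicitly is that $Q^{[i]}$ is well-defined — the intervals $[t^{[i]}(k_i),t^{[i]}(k_i+1))$ partition the universal-time line — so that $Q^{[i]}(k-K)$ unambiguously equals $J^{[i]}(k_i')$ for the unique $k_i'$ just identified; this is immediate but should be noted to justify the comparison.
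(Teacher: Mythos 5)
Your proposal is correct and follows essentially the same route as the paper's proof: both reduce the claim to the index inequality $k_i-K\le \tilde{k}_i$ (your $k_i'$) via the observation that each local tick coincides with a distinct global tick, so at most $K$ local awakenings occur over $K$ global ticks, and then conclude by monotonicity of $J^{[i]}$. Your explicit remarks on the boundary case $k_i-K\le 0$ and the well-definedness of $Q^{[i]}$ are minor additions the paper leaves implicit.
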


\begin{proof}
	Suppose $Q^{[i]}(k-K)=J^{[i]}(\tilde{k_i})$ where
	\begin{equation}
	\label{inequality1}
	t^{[i]}(\tilde{k_i})\le t(k-K)< t^{[i]}(\tilde{k_i}+1).
	\end{equation}
	Due to the relationship of global clock and local clocks, if $t^{[i]}(k_i)\le t(k)< t^{[i]}(k_i+1)$, there must be
	\begin{equation}
	\label{inequality2}
	t^{[i]}(k_i-K)\le t(k-K).
	\end{equation}
	Combining (\ref{inequality2}) and (\ref{inequality1}), we have
	\begin{equation}
	t^{[i]}(k_i-K)< t^{[i]}(\tilde{k_i}+1)
	\end{equation}
	which is equivalent to 
	\begin{equation}
	k_i-K\le \tilde{k_i}.
	\end{equation}
	Since $J^{[i]}(k_i)$ is monotonously non-increasing with respect to $k_i$, we have 
	\begin{equation}
	J^{[i]}(k_i-K)\ge J^{[i]}(\tilde{k_i})=Q^{[i]}(k-K)
	\end{equation}
	which completes the proof of lemma \ref{B_lemma}.
\end{proof}

Now proof of theorem \ref{propo_criteria} is given as follows by contradiction:
\begin{proof}
	We have assumed that $\forall i,j$
	\begin{subequations}
		\begin{align}
		\label{B_given_assumption1}
		&J^{[i]}(k_i-K)-J^{[i]}(k_i)<\epsilon,\\
		\label{B_given_assumption2}
		&J^{[j]}(k_j-K)-J^{[j]}(k_j)<\epsilon.
		\end{align}
	\end{subequations}
	Now suppose there exists $i,j$ and $k$ such that:
	\begin{subequations}
		\begin{align}
		\label{B_assmp1}
		&Q^{[j]}(k)-Q^{[i]}(k)>\epsilon,\\
		\label{B_assmp2}
		&t^{[i]}(k_i)\le t(k)< t^{[i]}(k_i+1),\\
		&t^{[j]}(k_j)\le t(k)< t^{[j]}(k_j+1).
		\end{align}
	\end{subequations}
	According to (\ref{B_assmp2}) and the definition of $Q^{[\cdot]}(\cdot)$, we have 
	\begin{equation}
	\label{B_equality1}
	Q^{[i]}(k)=J^{[i]}(k_i).
	\end{equation}
	According to (\ref{B_assmp2}) and lemma \ref{B_lemma}, we have 
	\begin{equation}
	\label{B_equality2}
	Q^{[i]}(k-K)\le J^{[i]}(k_i-K).
	\end{equation}
	Then combining (\ref{B_given_assumption1}), (\ref{B_equality1}) and (\ref{B_equality2}), we have
	\begin{equation}
	\label{B_inequality3}
	Q^{[i]}(k-K)<Q^{[i]}(k)+\epsilon.
	\end{equation}
	Combining (\ref{B_assmp1}) and (\ref{B_inequality3}) we derive:
	\begin{equation}
	\label{B_contradiction}
	Q^{[i]}(k-K)<Q^{[j]}(k)
	\end{equation}
	
	Denote $k_{0}:=k-K$ and index set $I_0=\left\{i\right\}$. For any $s>0$,
	\begin{equation}
	I_s:=\left\{m:\exists l\in I_{s-1},\ {\rm s.t.}(m,l)\in \mathcal{E}_{k_0+s} \right\}
	\end{equation}
	As the communication graph is strongly connected with limited intercommunication interval, $I_{K}$ must contains all the nodes in the network. Therefore there exists $1\le s^{*}\le K$ such that $j\in I_{s^*}$. The algorithm guarantees that
\begin{equation}
\forall l\in I_s,Q^{[l]}(k_0+s)\le Q^{[i]}(k_0).
\end{equation}
Thus we have
\begin{equation}
Q^{[j]}(k_0+K)\le Q^{[j]}(k_0+s^*)\le Q^{[i]}(k_0)
\end{equation}
which contradicts (\ref{B_contradiction}).
\end{proof}

\section{Proof of Theorem \ref{err_bound} and \ref{pro_feasi}}
\label{proof_err_bound}
According to the model in Appendix \ref{battery_model}, for any $i\in\mathcal{N}$, $p_i$ is bounded in set $\mathcal{P}_i$ with $\Vert p_i \Vert\le \sqrt{T}\overline{p}_i$. Before giving the proof of theorem \ref{err_bound} and \ref{pro_feasi}, two lemmas are provided as follows.
\begin{lemma}
\label{C_lemma1}
If Conditions \ref{condition1}-\ref{condition2} are satisfied by all processors in $\mathcal{N}$ at global clock $k$, then $\forall i,m\in\mathcal{N}$ and $m\neq i$,
\begin{equation}
u_m^{[i]}(k_i)-\mathcal{U}_m(\pi^{[i]}(k_i))\in O(\sqrt{\epsilon}).
\end{equation}
\end{lemma}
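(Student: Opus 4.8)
The plan is to bound the defect $u_m^{[i]}(k_i)-\mathcal{U}_m(\pi^{[i]}(k_i))$ in three moves: (i) exhibit a cutting plane generated by processor $m$ that has already reached processor $i$, which caps $u_m^{[i]}(k_i)$ by a supergradient hyperplane of $\mathcal{U}_m$ anchored at some $\pi^{[m]}(\hat{k}_m)$; (ii) note that such a hyperplane overestimates $\mathcal{U}_m(\pi^{[i]}(k_i))$ by at most a Lipschitz multiple of $\|\pi^{[i]}(k_i)-\pi^{[m]}(\hat{k}_m)\|$; and (iii) control that distance through the strong concavity of the regularized objective and the global criterion of Theorem~\ref{propo_criteria}. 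Throughout I fix the global clock $k$, set $K:=d_{\mathcal{G}_k}\overline{T}$ as in Condition~\ref{condition1}, and let $\hat{k}_m$ be the largest local clock of processor $m$ with $t^{[m]}(\hat{k}_m)\le t(k-K)$, so that by the clock bookkeeping of Lemma~\ref{B_lemma} one has $k_m-K\le\hat{k}_m\le k_m$.

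For the propagation move I will reuse the window-counting from the proof of Theorem~\ref{propo_criteria}: by Assumption~\ref{assumption2}, a cutting plane generated no later than global clock $k-K$ is forwarded to processor $i$ by global clock $k$ (along a shortest path, one hop per $\overline{T}$ clocks). Since the induced polyhedra $\mathcal{H}^{[i]}(\cdot)$ are nonincreasing and pruning inactive constraints does not enlarge the induced polyhedron, $\mathcal{H}^{[i]}_{tmp}(k_i)$ — hence $z^{[i]}(k_i)$ — lies in the half-space of every such plane. This yields (a) $z^{[i]}(k_i)$ obeys $h_m(z^{[m]}(\hat{k}_m))$ from \eqref{gen_h}, i.e.\ $u_m^{[i]}(k_i)\le f_m(\tilde{p}_m)+\tilde{p}_m^{\mathsf T}\pi^{[i]}(k_i)$ with $\tilde{p}_m:=\arg\mathcal{U}_m(\pi^{[m]}(\hat{k}_m))\in\mathcal{P}_m$ (with the obvious $\tilde{p}_m\mapsto\tilde{p}_m-L$ modification when $m=i^*$); and (b) every plane in $H^{[m]}_{tmp}(\hat{k}_m)$ was generated at a global clock $\le k-K$, so $z^{[i]}(k_i)\in\mathcal{H}^{[m]}_{tmp}(\hat{k}_m)$ and therefore $J^{[m]}(\hat{k}_m)\ge J(z^{[i]}(k_i))=J^{[i]}(k_i)$. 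As $J^{[m]}(\cdot)$ is nonincreasing and $k_m-K\le\hat{k}_m\le k_m$, Condition~\ref{condition1} for $m$ gives $J^{[m]}(\hat{k}_m)-J^{[m]}(k_m)<\epsilon$, and combined with the global criterion $|J^{[m]}(k_m)-J^{[i]}(k_i)|<\epsilon$ (Theorem~\ref{propo_criteria}, valid since Conditions~\ref{condition1}--\ref{condition2} hold at all processors) this gives $0\le J^{[m]}(\hat{k}_m)-J^{[i]}(k_i)<2\epsilon$.

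To finish, recall $J(z)=e^{\mathsf T}z-\rho\|z\|^2$ is $2\rho$-strongly concave and $z^{[m]}(\hat{k}_m)$ maximizes it over the convex set $\mathcal{H}^{[m]}_{tmp}(\hat{k}_m)\ni z^{[i]}(k_i)$; hence $J^{[m]}(\hat{k}_m)-J^{[i]}(k_i)\ge\rho\|z^{[m]}(\hat{k}_m)-z^{[i]}(k_i)\|^2$, so $\|\pi^{[i]}(k_i)-\pi^{[m]}(\hat{k}_m)\|\le\|z^{[i]}(k_i)-z^{[m]}(\hat{k}_m)\|\le\sqrt{2\epsilon/\rho}$. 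Next, $\mathcal{U}_m$ is a pointwise minimum of affine maps whose gradients lie in $\mathcal{P}_m$ (where $\|p_m\|\le\sqrt{T}\,\overline{p}_m$), so it is concave and $\sqrt{T}\,\overline{p}_m$-Lipschitz, while $g(\pi):=f_m(\tilde{p}_m)+\tilde{p}_m^{\mathsf T}\pi$ is a supporting hyperplane of $\mathcal{U}_m$ at $\pi^{[m]}(\hat{k}_m)$ (thus $g\ge\mathcal{U}_m$ pointwise). Combining $u_m^{[i]}(k_i)\le g(\pi^{[i]}(k_i))$ from (a) with the affine estimate $g(\pi^{[i]}(k_i))-\mathcal{U}_m(\pi^{[i]}(k_i))\le 2\sqrt{T}\,\overline{p}_m\|\pi^{[i]}(k_i)-\pi^{[m]}(\hat{k}_m)\|$ and the distance bound above,
\[
u_m^{[i]}(k_i)-\mathcal{U}_m(\pi^{[i]}(k_i))\ \le\ 2\sqrt{T}\,\overline{p}_m\,\sqrt{2\epsilon/\rho}\ \in\ O(\sqrt{\epsilon}).
\]
For the matching lower bound I will argue from optimality of $z^{[i]}(k_i)$: increasing $u_m^{[i]}(k_i)$ toward the smallest plane right-hand side (which is $\ge\mathcal{U}_m(\pi^{[i]}(k_i))$) strictly raises $J$ unless the loose budget plane $e^{\mathsf T}z\le M_i$ blocks it, and that plane is inactive once the iterates have left the transient, since they converge and $M_i$ strictly exceeds the optimal value; boundedness of $\{z^{[i]}(k_i)\}$, forced by the $-\rho\|z\|^2$ penalty, makes this quantitative. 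Only the upper bound is actually invoked in Theorems~\ref{err_bound} and \ref{pro_feasi}.

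The crux — and the place where the proof must be done carefully — is the clock bookkeeping that makes the single horizon $K=d_{\mathcal{G}_k}\overline{T}$ of Condition~\ref{condition1} play two opposing roles at once: long enough that every plane generated by global clock $k-K$ has reached $i$ (needed for (a) and (b)), yet, via monotonicity of $J^{[m]}$ and Condition~\ref{condition1} applied at $m$, short enough that $J^{[m]}(\hat{k}_m)$ stays within $\epsilon$ of $J^{[m]}(k_m)$. Reconciling these, and matching them with the fixed-grid $\overline{T}$-windows of Assumption~\ref{assumption2}, is the delicate step; the rest is the routine Cauchy--Schwarz and strong-concavity bookkeeping above.
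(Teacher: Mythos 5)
Your overall strategy---cap $u_m^{[i]}$ by a supporting hyperplane of $\mathcal{U}_m$, then pay a Lipschitz price proportional to $\Vert\pi^{[i]}-\pi^{[m]}\Vert$, which you control via concavity and the global criterion---is the right shape, and your strong-concavity derivation of the distance bound is in fact more honest than what the paper writes. But the load-bearing step is flawed: both your claim (a) (that $z^{[i]}(k_i)$ satisfies the specific plane $h_m(z^{[m]}(\hat{k}_m))$) and your claim (b) (that $z^{[i]}(k_i)\in\mathcal{H}^{[m]}_{tmp}(\hat{k}_m)$) require cutting planes to survive transit from $m$ to $i$. Step~2 of Algorithm~\ref{algorithm} prunes all \emph{inactive} constraints before writing to the output cache, so an intermediate node may discard exactly the plane you need; set inclusion between different processors' polyhedra is therefore not guaranteed. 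Your parenthetical ``pruning inactive constraints does not enlarge the induced polyhedron'' is backwards---removing constraints can only enlarge the induced polyhedron. What pruning \emph{does} preserve is the LP optimal value at that node, which is why the proof of Theorem~\ref{propo_criteria} is phrased entirely in terms of the monotone values $Q^{[\cdot]}(\cdot)$ and never in terms of set containment. So your inequality $J^{[m]}(\hat{k}_m)\ge J^{[i]}(k_i)$ is recoverable by the value-propagation argument, but the containment $z^{[i]}(k_i)\in\mathcal{H}^{[m]}_{tmp}(\hat{k}_m)$ that feeds your strong-concavity step, and the survival of the particular plane in (a), are not. There is also an unaddressed edge case in (a): if $u_m^{[m]}(\hat{k}_m)\le\mathcal{U}_m(\pi^{[m]}(\hat{k}_m))$ then $m$ emits $h^{\emptyset}$ at that clock and no plane exists to propagate.

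The paper's proof avoids all of this machinery. It uses Condition~\ref{condition2} \emph{at processor $m$} to get $u_m^{[m]}(k_m)-\mathcal{U}_m(\pi^{[m]}(k_m))<\epsilon$, invokes Theorem~\ref{propo_criteria} to get $\vert J^{[i]}(k_i)-J^{[m]}(k_m)\vert<\epsilon$, converts that to $\Vert z^{[i]}(k_i)-z^{[m]}(k_m)\Vert\le\sqrt{\epsilon/\sigma}$ via the asserted strict-concavity inequality (so in particular $\vert u_m^{[i]}-u_m^{[m]}\vert\le\sqrt{\epsilon/\sigma}$ and $\Vert\pi^{[i]}-\pi^{[m]}\Vert\le\sqrt{\epsilon/\sigma}$), and finishes with the same Lipschitz bound $\Vert p_m\Vert\le\sqrt{T}\,\overline{p}_m$ you use. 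Your proof becomes correct, and essentially collapses onto the paper's, if you replace the propagated-plane bound on $u_m^{[i]}$ by ``$u_m^{[i]}\le u_m^{[m]}+\Vert z^{[i]}-z^{[m]}\Vert\le\mathcal{U}_m(\pi^{[m]})+\epsilon+O(\sqrt{\epsilon})$'' --- this also disposes of the empty-plane case. The final paragraph on a matching lower bound can simply be dropped: the paper's proof establishes (and Theorems~\ref{err_bound} and \ref{pro_feasi} use) only the upper bound.
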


\begin{proof}
Strict concavity of $J(\cdot)$ follows that $\vert J^{[i]}(k_i)-J^{[j]}(k_j)\vert\ge \sigma\Vert z^{[i]}(k_i)-z^{[j]}(k_j)\Vert^2$ for some $\sigma>0$. 
Take $\pi$ as row vector, so the transposition symbol on $\pi$ is omitted, and for simplicity reason we omit $k_i$ and $k_m$ in the following proof.
	\begin{align*}
	&\quad u_m^{[i]}-\mathcal{U}_m(\pi^{[i]})
	=u_m^{[i]}-\min_{p_m\in\mathcal{P}_m}\left\{f_m(p_m)+\pi^{[i]}p_m\right\}\\
	&=u_m^{[i]}-\min_{p_m\in\mathcal{P}_m}\left\{f_m(p_m)+\pi^{[m]}p_m+(\pi^{[i]}-\pi^{[m]})p_m\right\}\\
	&\overset{}{\le} u_m^{[i]}-\mathcal{U}_m(\pi^{[m]})+\max_{p_m\in\mathcal{P}_m}\left\{(\pi^{[m]}-\pi^{[i]})p_m\right\}\\
	&\overset{(1)}{\le}
	u_m^{[m]}+\sqrt{\epsilon/ \sigma}-\mathcal{U}_m(\pi^{[m]})+\max_{p_m\in\mathcal{P}_m}\left\{(\pi^{[m]}-\pi^{[i]})p_m\right\}
	\\
	&\overset{(2)}{\le} \epsilon+\sqrt{\epsilon/\sigma}+\max_{p_m\in\mathcal{P}_m}\left\{(\pi^{[m]}-\pi^{[i]})p_m\right\}\\
	&\overset{}{\le} \epsilon+\sqrt{\epsilon/\sigma}+\max_{p_m\in\mathcal{P}_m}\Vert\pi^{[m]}-\pi^{[i]}\Vert\Vert p_m\Vert\\
	&\overset{(3)}{\le} \epsilon+\sqrt{\epsilon/\sigma}(1+\sqrt{T}\max_{j}\left\{\overline{p}_j\right\})\in O(\sqrt{\epsilon})
	\end{align*}
where (1) comes from $\vert u_m^{[i]}-u_m^{[m]}\vert\le \sqrt{\epsilon/\sigma}$, (2) comes from Condition \ref{condition2} and (3) comes from $\Vert \pi^{[m]}-\pi^{[i]}\Vert\le \sqrt{\epsilon/\sigma}$ and $\Vert p_m \Vert\le \sqrt{T}\bar{p}_m\le \sqrt{T}\max_{j}\left\{\bar{p}_j\right\}$.
\end{proof}

Now, proof of theorem \ref{err_bound} and \ref{pro_feasi} is given:

\begin{proof}
Skip the $k_i$ for conciseness. According to Lemma \ref{C_lemma1}, there exists a positive constant $c$ such that $u_m^{[i]}-\mathcal{U}_m(\pi^{[i]})\le c\sqrt{\epsilon}$ for any $m\neq i$. Then define
\begin{subequations}
	\begin{align}
	&\delta z:=[0,\ldots,0,c\sqrt{\epsilon},\ldots,c\sqrt{\epsilon}]^{\mathsf{T}}\\
	&\bar{z}:=z^{[i]}-\delta z
	\end{align}
\end{subequations}
Note that $\Vert\bar{z}-z^{[i]}\Vert=c\sqrt{n\epsilon}$, so $\bar{z}$ lies in the $c\sqrt{n\epsilon}$ neighborhood of point $z^{[i]}$. According to Lemma \ref{C_lemma1}, we know that for any $m\in\mathcal{N}$ there are $u_m^{[i]}-c\sqrt{\epsilon}\le \mathcal{U}_m(\pi^{[i]})$ for any $m$, so $\bar{z}\in\cap_{i=1}^n\mathcal{Z}_i=\mathcal{S}$. This implies that the intersection of $\mathcal{S}$ and the  $c\sqrt{n\epsilon}$ neighborhood of $z^{[i]}$ which is denoted by $N_{c\sqrt{n\epsilon}}(z^{[i]})$, is not empty.

Suppose that $J^*$ is the maximizer of $J(\cdot)$ in $\mathcal{S}$. Since $\bar{z}\in N_{c\sqrt{n\epsilon}}(z^{[i]})\cap\mathcal{S}$, $J^*\ge J(\bar{z})$. Since $z^{[i]}$ is the maximizer of $J(\cdot)$ in $\mathcal{H}_{tmp}^{[i]}$ and $\mathcal{S}\subset\mathcal{H}_{tmp}^{[i]}$, we have $J(z^{[i]})\ge J^*$. So we get the following relationship:
\begin{equation}
J(z^{[i]})\ge J^*\ge J(\bar{z})\label{inequality}
\end{equation}
Look at the first and last item in (\ref{inequality}):
\begin{subequations}
	\begin{align}
	&\quad J(z^{[i]})-J(\bar{z})\\
	&=e^{\mathsf{T}}(z^{[i]}-\bar{z})+\rho(\Vert\bar{z}\Vert-\Vert z^{[i]}\Vert)\\
	&\le e^{\mathsf{T}}(z^{[i]}-\bar{z})+\rho\Vert\bar{z}-z^{[i]}\Vert\\
	&=nc\sqrt{\epsilon}+\rho c\sqrt{n\epsilon}\in O(\sqrt{\epsilon}).
	\end{align}
\end{subequations}
According to squeeze theorem, we get $0\le J(z^{[i]})-J^*\le $ and $J(z^{[i]})-J^*\in O(\sqrt{\epsilon})$,which completes the proof of theorem \ref{err_bound} and \ref{pro_feasi}.
\end{proof}

\end{document}